\newenvironment{proofsketch}{%
  \proof}{\endproof}
\newcommand*{\defeq}{\mathrel{%
  \rlap{\raisebox{0.3ex}{$\m@th\cdot$}}%
  \raisebox{-0.3ex}{$\m@th\cdot$}}%
  =}\makeatother
\newcommand*{\eqdef}{=\mathrel{%
  \raisebox{0.3ex}{$\m@th\cdot$}%
  \llap{\raisebox{-0.3ex}{$\m@th\cdot$}}}%
  }\makeatother
\newcommand*{\defeqv}{\mathrel{%
  \rlap{\raisebox{0.3ex}{$\m@th\cdot$}}%
  \raisebox{-0.3ex}{$\m@th\cdot$}}%
  \Longleftrightarrow}\makeatother
\newcommand*{\eqvdef}{\Longleftrightarrow\mathrel{%
  \raisebox{0.3ex}{$\m@th\cdot$}%
  \llap{\raisebox{-0.3ex}{$\m@th\cdot$}}}%
  }\makeatother
\newcommand*{\defdefeq}{\mathrel{%
  \rlap{\raisebox{0.3ex}{$\m@th\cdot\cdot$}}%
  \raisebox{-0.3ex}{$\m@th\cdot\cdot$}}%
  =}\makeatother
\newcommand{\prob}[1]{\classFont{#1}}
\newcommand{\perm}{\protect\ensuremath{\prob{PERMANENT}}\xspace}
\newcommand{\cycc}{\protect\ensuremath{\prob{CYCLE-COVER}}\xspace}
\newcommand{\perfmat}{\protect\ensuremath{\prob{PERFECT-MATCHING}}\xspace}
\newcommand{\impmat}{\protect\ensuremath{\prob{IMPERFECT-MATCHING}}\xspace}
\newcommand{\ssat}{\protect\ensuremath{\Sigma_1\classFont{SAT}}\xspace}
\newcommand{\cnf}{\protect\ensuremath{\classFont{CNF}}\xspace}
\newcommand{\tcnf}{\protect\ensuremath{\classFont{3CNF}}\xspace}
\newcommand{\scnf}{\protect\ensuremath{\Sigma_1\classFont{CNF}}\xspace}
\newcommand{\stcnf}{\protect\ensuremath{\Sigma_1\classFont{3CNF}}\xspace}
\newcommand{\bcnf}{\protect\ensuremath{\classFont{2CNF}}\xspace}
\newcommand{\dep}[1]{\protect\ensuremath{=\!\!(#1)}\xspace}
\newcommand{\cTeam}[1]{\protect\ensuremath{\#{#1}^{\textnormal{team}}}\xspace}
\newcommand\classFont[1]{\textnormal{#1}}
\renewcommand{\P}{\protect\ensuremath{\classFont{P}}\xspace}
\newcommand{\NP}{\protect\ensuremath{\classFont{NP}}\xspace}
\newcommand{\cP}{\protect\ensuremath{{\classFont{\#P}}}\xspace}
\newcommand{\totP}{\protect\ensuremath{{\classFont{TotP}}}\xspace}
\newcommand{\fP}{\protect\ensuremath{{\classFont{FP}}}\xspace}
\newcommand{\PP}{\protect\ensuremath{\classFont{PP}}\xspace}
\newcommand{\coNP}{\protect\ensuremath{\classFont{coNP}}\xspace}
\newcommand{\cAC}{\protect\ensuremath{{\#\classFont{AC}^0}}\xspace}
\newcommand{\FO}{\protect\ensuremath{\classFont{FO}}\xspace}
\newcommand{\FTC}{\protect\ensuremath{\classFont{FTC}^0}\xspace}
\newcommand{\struc}{\protect\ensuremath{\textnormal{STRUC}}}
\newcommand{\tString}{\protect\ensuremath{τ_{\textnormal{string}}}\xspace}
\newcommand{\tu}[1]{\overline{#1}}
\newcommand\restr[2]{{
  \ensuremath{\left.\kern-\nulldelimiterspace 
  #1 
  \vphantom{\big|} 
  \right|_{#2} 
  }}}
\newcommand{\ie}{i.e.\@\xspace}
\newcommand{\eg}{e.g.\@\xspace}
\newcommand{\wrt}{with respect to\@\xspace}
\newcommand{\stfa}{such that for all\@\xspace}
\newcommand{\ifff}{if and only if\@\xspace}
\newcommand{\cFO}{\protect\ensuremath{\#\FO}\xspace}
\newcommand{\rel}{\mathrm{rel}}
\newcommand{\dom}{\mathrm{dom}}
\newcommand{\func}{\mathrm{func}}
\newcommand{\cSigmaiRel}[1]{\protect\ensuremath{{\#\Sigma_{#1}^{\rel}}}\xspace}
\newcommand{\cPiiRel}[1]{\protect\ensuremath{{\#\Pi_{#1}^{\rel}}}\xspace}
\newcommand{\Sigmai}[1]{{\Sigma_{#1}}}
\newcommand{\calA}{\mathcal{A}}
\newcommand{\calC}{\mathcal{C}}
\newcommand{\enc}{\mathrm{enc}}
\newcommand{\problemdef}[3]{%
\begin{center}
\begin{tabular}{rp{6cm}}\toprule
 \textbf{Problem}:& #1\\
 \textbf{Input}:& #2\\
 \textbf{Output}:& #3\\
 \bottomrule
\end{tabular}
\end{center}
}
\title{Counting of Teams in First-Order Team Logics}
\titlerunning{Counting of Teams in First-Order Team Logics}
\author{Anselm Haak}{Theoretische Informatik, Leibniz Universität Hannover, Appelstraße, D-30167, Germany}{haak@thi.uni-hannover.de}{https://orcid.org/0000-0003-1031-5922}{}
\author{Juha Kontinen}{Department of Mathematics and Statistics, University of Helsinki, Pietari Kalmin katu 5, 00014, Finland}{juha.kontinen@helsinki.fi}{https://orcid.org/0000-0003-0115-5154}{}
\author{Fabian Müller}{Theoretische Informatik, Leibniz Universität Hannover, Appelstraße, D-30167, Germany}{fabian.mueller@thi.uni-hannover.de}{www.thi.uni-hannover.de}{}
\author{Heribert Vollmer}{Theoretische Informatik, Leibniz Universität Hannover, Appelstraße, D-30167, Germany}{vollmer@thi.uni-hannover.de}{https://orcid.org/0000-0002-9292-1960}{}
\author{Fan Yang}{Department of Mathematics and Statistics, University of Helsinki, Pietari Kalmin katu 5, 00014, Finland}{fan.yang@helsinki.fi}{https://orcid.org/0000-0003-0392-6522}{}
\authorrunning{A. Haak, J. Kontinen, F. Müller, H. Vollmer, F. Yang}
\keywords{team-based logics, counting classes, finite model theory, descriptive complexity}
\begin{document}

\maketitle

\begin{abstract}
We study descriptive complexity of counting complexity classes in the range from \cP to $\#\cdot \NP$. A corollary of Fagin's characterization of NP by existential second-order logic is that  \cP can be logically described as the class of functions counting satisfying assignments to free relation variables in first-order formulae. In this paper we extend this study to classes beyond \cP and extensions of first-order logic with team semantics.  
These team-based logics are closely related to existential second-order logic and its fragments, hence our results also shed light on the complexity of counting for extensions of first-order logic in Tarski's semantics.
Our results show that the class $\#\cdot \NP$ can be logically characterized by independence logic and existential second-order logic, whereas dependence logic and inclusion logic give rise to 
subclasses of $\#\cdot \NP$ and \cP, respectively. We also study the function class generated by inclusion logic and relate it to the complexity class $\totP\subseteq \cP$. Our main technical result shows that the problem of counting satisfying assignments for monotone Boolean $\Sigma_1$-formulae is $\#\cdot \NP$-complete with respect to Turing reductions as well as complete for the function class generated by dependence logic with respect to first-order reductions. 

\end{abstract}

\newpage

\section{Introduction}

The question of the power of counting arises in propositional and predicate logic in a number of contexts. Counting the number of satisfying assignments for a given propositional formula, $\#\mathrm{SAT}$, is complete for Valiant's class \cP of functions counting accepting paths of nondeterministic polynomial-time Turing machines \cite{DBLP:journals/tcs/Valiant79}. Valiant also proved that $\#\mathrm{SAT}$ even remains complete when restricted to monotone \bcnf-formulae.

The class $\cP$ can be seen as the counting analogue of $\NP$, which was shown by Fagin \cite{fagingeneralized} to correspond to existential second order logic, where the quantified relation encodes accepting computation paths of NP-machines. Hence, if we define $\cFO^\rel$ to count satisfying assignments to free relational variables in first-order (\FO-) formulae, we obtain $\cFO^\rel=\cP$. This result has been refined to prefix classes of \FO showing, e.g., that $\cPiiRel{2} = \cP$ \cite{DBLP:journals/jcss/SalujaST95}. 

If we define $\cFO^\func$ in the same fashion as $\cFO^\rel$ except that we count assignments to function variables instead of relation variables, then obviously $\cFO^\func=\cP$. The situation changes for the prefix classes, though. In particular, unlike for $\#\Pi_1^\rel$ , it holds that $\#\Pi_1^\func = \cP$, and, remarkably, also arithmetic circuit classes like $\cAC$ can be characterized in this context \cite{DBLP:conf/csl/0001HKV16}. 

In this paper we consider a different model-theoretic approach to the study of counting processes using so-called team-based logics. In these logics, formulae with free variables are evaluated not for single assignments to these variables but for \emph{sets} of such assignments (called \emph{teams}). Logics with team semantics have been developed for the study of various dependence and independence concepts important in many areas such as (probabilistic) databases and Bayesian networks (see, e.g. \cite{DBLP:journals/iandc/HannulaK16,DBLP:conf/wollic/CoranderHKPV16,JELIA}) for which model counting is an important inference task (see, e.g., \cite{DBLP:journals/corr/BacchusDP14,DBLP:conf/lics/KuusistoL18}). In addition, team-based logics have interesting connections to a wide range of areas such as formal semantics of natural language \cite{Ciardelli2015}, social choice theory \cite{PacuitYang2016}, and quantum information theory \cite{PV15}.

In team semantics, a first-order formula is satisfied by a team \ifff all its members satisfy the formula individually. Interest in teams stems from the introduction of different logical atoms describing properties of teams, called team atoms, such as the value of a variable being functionally dependent on other variables (characterized by the \emph{dependence} atom $\dep{\bar{x},y}$), a variable being independent from other variables (characterized by the \emph{independence} atom $\bar{y}\bot_{\bar{x}}\bar{z}$), and the values of a variable occurring as values of some other variable (characterized by the \emph{inclusion} atom $\bar{x}\subseteq\bar{y}$), etc. (\cite{DBLP:books/daglib/0030191,DBLP:journals/sLogica/GradelV13,DBLP:journals/apal/Galliani12}). 

We initiate in this paper the study of counting for team-based logics. In our proofs we utilize the known correspondences between team-based logics and existential second-order logic ($\Sigma^1_1$) and its fragments (see Theorem \ref{expr}). We want to stress that our results are also novel for existential second-order logic and its fragments, and that there is no natural way to carry out the study of the function class generated by inclusion logic, that is, \FO extended by the inclusion atom, in Tarskian semantics.

We define $\cTeam{\FO}$ to be the class of functions counting teams that satisfy a given FO-formula, and similarly for extensions of FO by team atoms. 
Making use of different team atoms, we give a characterization of $\#\cdot\NP$. 
While it is relatively easy to see that with every finite set $A$ of \NP-definable team atoms, the class $\cTeam{\FO(A)}$ stays a subclass of $\#\cdot\NP$ (Toda's generalization of $\cP$, see \cite{DBLP:journals/sigact/HemaspaandraV95} for a survey of counting classes like these), we show that \FO extended with the independence atom is actually sufficient to characterize the full class $\#\cdot\NP$:
\[\cTeam{\FO(\bot)}=\#\Sigma_1^1=\#\cdot\NP.\]

The situation with inclusion logic and dependence logic is more complex due to their strong closure properties: satisfaction of formulae is closed under union for inclusion logic and is closed downwards for dependence logic. We show that $\cTeam{\FO(\subseteq)}$ is a subclass of \totP, which is a class of counting problems with easy decision versions. Note that \totP is a strict subclass of \cP unless $\P = \NP$ and consequently the same holds for $\cTeam{\FO(\subseteq)}$. Furthermore, $\cTeam{\FO(\dep{\dots})}$ is a subclass of $\#\cdot \NP$, which we believe to be strict as well.
Interestingly, both classes contain complete problems from their respective superclasses. In establishing this result for dependence logic, we introduce an interesting class of monotone quantified Boolean formulae and show that the corresponding counting problem where the all-0-assignment is not counted, $\#\scnf_*^-$, is $\#\cdot\NP$-complete. In order to prove $\#\cdot\NP$-completeness we also show that the more natural problem of counting all satisfying assignments of the same class of formulae is $\#\cdot \NP$-complete by introducing a new technique of simultaneous reductions between pairs of counting problems, which we hope will also be useful in other contexts.

For inclusion logic we show that the well-known $\cP$-complete problem $\#\bcnf^+ $ is in $\cTeam{\FO(\subseteq)}$ and that the problem of counting assignments for existentially quantified dual-Horn formulae is hard for $\cTeam{\FO(\subseteq)}$.

In related previous work, so-called weighted logics have been used to logically characterize counting complexity classes \cite{DBLP:conf/lics/ArenasMR17}, and the decision-problem analogue \PP of $\cP$ and the counting hierarchy have been logically characterized in \cite{DBLP:journals/tocl/Kontinen09,DBLP:journals/iandc/KontinenN11,DurandEKV15}. Counting classes from circuit complexity beyond $\cAC$ have been logically characterized in \cite{DBLP:conf/lics/0001HV18}.

Due to space restrictions, we only give proof sketches for some theorems, and detailed proofs for all results throughout the paper are deferred to the full version of this paper.

\section{Definitions and Preliminaries}\label{sec:prelims}

\paragraph*{First-order Logic and Team Semantics}

Let us start by recalling the syntax of first-order logic (\FO). In this work, we only consider relational vocabularies (i.e., vocabularies with no function or constant symbols), and thus the only first-order terms are variables. 
Formulae of first-order logic are defined by the following grammar:
\[φ \defdefeq φ ∧ φ \mid φ ∨ φ \mid ∃x φ \mid ∀x φ \mid R(\tu x) \mid\! \neg R(\tu x) \mid x = y \mid x \neq y \label{eq:snytax}\tag{$\star$}\]
where $x,y$ are variables, $R$ is a relation symbol, and $\tu x$ is a tuple of the appropriate number of variables.

The set of {\em free variables} of a formula $\varphi$ is defined as usual, and we sometimes write $\varphi(x_1,\dots,x_k)$ to emphasize that the free variables of $\varphi$ are among $x_1,\dots,x_k$. A formula with no free variable is called a {\em sentence}. For any $k$, the fragment $\Sigma_k$ of \FO consists of all formulae of the form $\exists \tu{x_1}\forall \tu{x_2}\dots Q\tu{x_k}\varphi$, where $\varphi$ is quantifier-free and $Q$ is either $\exists$ or $\forall$ depending on whether $k$ is odd or even; similarly, the fragment $\Pi_k$ is defined as the class of all formulae $\forall \tu{x_1}\exists\tu{x_2}\dots Q\tu{x_k}\varphi$ in prenex normal form with a quantifier prefix with $k$ alternations starting with universal quantifiers.

We only consider finite structures with a finite relational vocabulary $\sigma$. 
 Denote the class of all such structures by $\struc[\sigma]$, and let $\textrm{dom}(\calA)$ denote the universe of a $\sigma$-structure $\calA$. We will always use structures with universe $\{0, 1, \dots, n-1\}$ for some $n \in \mathbb{N}\setminus\{0\}$. We assume that our structures contain a \emph{built-in binary relation} $\leq$ and \emph{ternary relations} $+, \times$ with the usual interpretation, i.e., $\leq$ is interpreted in a model of any size as the ``less than or equal to'' relation on $\mathbb{N}$, $+$ is interpreted as addition and $\times$ as multiplication. 
 We write $\enc_\sigma(\mathcal{A})$ for the standard binary encoding of a $\sigma$-structure $\mathcal{A}$ (see \eg, \cite{DBLP:books/daglib/0095988}): Relations are encoded row by row by listing their truth values as $0$'s and $1$'s.
 The whole structure is encoded by the concatenation of the encodings of its relations.

We assume that the reader is familiar with the usual Tarskian semantics for first-order formulae, in which formulae are evaluated with respect to single assignments of a structure. In this paper, we also consider so-called {\em team semantics} for first-order formulae, in which formulae are evaluated with respect to teams. A {\em team} is a {\em set} of assignments of a structure, that is, a set of functions $s \colon \{x_1, \dots, x_k \,\} \to \textrm{dom}(\calA)$, where we call $\{x_1, \dots, x_k \,\}$ the domain of the team. Note that the empty set $\emptyset$ is a team, called \emph{empty team}, and the singleton $\{\emptyset\}$ containing only the empty assignment is also a team. We denote by $\textrm{team}(\calA, (x_1, \dots, x_k))$ the set of all teams over a structure $\calA$ with the domain $\{x_1, \dots, x_k\}$. Due to certain connections between team logics and second-order logic it is often helpful to view teams as relations. Let $\mathcal{A}$ be a structure and $X$ a team of $\mathcal{A}$ with domain $\{x_1, \dots, x_k\}$. $\mathcal{A}$ and $X$  induce the $k$-ary relation $\rel(X)$ on $\dom(\mathcal{A})$ defined as
\[\rel(X) \defeq \{(s(x_1), \dots, s(x_n)) \mid s \in X\}.\]
Furthermore, for any subset $V \subseteq \{x_1, \dots, x_k\}$ of variables we define
\[\restr{X}{V} \defeq \{\restr{s}{V} \mid s \in X\},\]
the restriction of team $X$ to domain $V$.

We define inductively the notion of a team $X$ with domain $\{x_1,\dots,x_k\}$ of a structure $\mathcal{A}$ with $A\defeq\textrm{dom}(\calA)$ satisfying an \FO-formula $\varphi(x_1,\dots,x_k)$, denoted by $\mathcal{A}\models_X\varphi$, as follows:
\begin{itemize}
  \item $\mathcal{A} \models_X α$ for $\alpha$ an atomic formula \ifff for all $s \in X$, $\mathcal{A} \models_s α$ in the usual Tarskian semantics sense.
  \item $\mathcal A \models_X \varphi \vee \psi$ \ifff there are teams $Y,Z\subseteq X$ such that $Y \cup Z = X$, $\mathcal A \models_Y \varphi$ and $\mathcal A \models_Z \psi$.
  \item $\mathcal A \models_X \varphi \wedge \psi$ \ifff $\mathcal A \models_X \varphi$ and $\mathcal A \models_X \psi$.
  \item $\mathcal A \models_X \exists x \varphi$ \ifff there exists a function $F \colon X \to \mathcal P(A)\setminus\{\emptyset\}$, called \emph{supplementing function}, such that $\mathcal A \models_{X[F/x]} \varphi$, where
  \begin{align*}
  	X[F/x]&=\{s[a/x]\mid s\in X \text{ and }a\in F(s)\}& \text{ and } ~
  	s[a/x](y)= 
  	\begin{cases}
  	a,& \text{ if } x= y, \\
  	s(y),& \text{ else}.
  	\end{cases}
  \end{align*}
  
  \item $\mathcal A \models_X \forall x \varphi$ \ifff $\mathcal A \models_{X[A/x]} \varphi$, where
  \(X[A/x]=\{s[a/x]\mid s\in X\text{ and }a\in A\}.\)
\end{itemize}

\noindent A sentence $\varphi$ is said to be \emph{true} in $\calA$, written $\calA\models\varphi$, if $\calA\models_{\{\emptyset\}}\varphi$. 

\FO-formulae $\varphi$ are flat over team semantics, i.e., $\mathcal{A} \models_X \varphi$, \ifff $\mathcal{A} \models_{s} \varphi$ for all $s \in X$. In this sense, team semantics is conservative over \FO-formulae. We now extend first-order logic by sets of atomic formulae which are not flat. For any sequence $\tu{x}$ of variables and variable $y$, the string $\dep{\tu{x}, y}$ is called a {\em dependence atom}. For any sequences $\tu{x}, \tu{y}, \tu{z}$ of variables, the string $\tu{y} \perp_{\tu{x}}\tu{z}$ is called an {\em independence atom}. For any two sequences $\tu{x}$ and $\tu{y}$ of variables of the same length, the string $\tu{x} \subseteq \tu{y}$ is called an {\em inclusion atom}.
The team semantics of these atoms is defined as follows:
\begin{itemize}
  \item $\mathcal A \models_X \dep{\tu{x},y},$ \ifff for all $s, s' \in X$, if $s(\tu{x})=s'(\tu{x})$, then $s(y)=s'(y)$.
  \item $\mathcal A \models_X \tu{y} \bot_{\tu{x}} \tu{z}$ \ifff for all $s, s' \in X$ such that $s(\tu{x})=s'(\tu{x})$, there exists $s''\in X$ such that $s''(\tu{x})=s(\tu{x})$, $s''(\tu{y})=s(\tu{y})$ and $s''(\tu{z})=s'(\tu{z})$.
  \item $\mathcal A \models_X \tu{x} \subseteq \tu{y}$ \ifff for all $s \in X$ there is $s' \in X$ such that $s(\tu{x})=s'(\tu{y})$.
\end{itemize}



For any subset $A \subseteq \{\dep{\dots}, ⊥, \subseteq\}$, we define $\FO(A)$ as first-order logic extended by the respective atoms, and refer to such a logic as {\em team-based logic}. More precisely we extend the grammar (\ref{eq:snytax}) by adding a rule for each atom in $A$. For example for $\FO(\{\subseteq\})$ we add the rule 
\[\varphi\defdefeq \tu{x} \subseteq \tu{y},\]
where $\tu{x},\tu{y}$ are tuples of variables. For convenience, we often omit the curly brackets and write for example $\FO(\subseteq)$ instead of $\FO(\{\subseteq\})$.

The team-based logic $\FO(\dep{\dots})$ is known in the literature as {\em dependence logic} \cite{DBLP:books/daglib/0030191}, $\FO(\perp)$ as {\em independence logic} \cite{DBLP:journals/sLogica/GradelV13} and $\FO(\subseteq)$ as {\em inclusion logic} \cite{DBLP:journals/apal/Galliani12}.
We recall some basic properties of these logics from \cite{DBLP:books/daglib/0030191,DBLP:journals/sLogica/GradelV13,DBLP:journals/apal/Galliani12}: Formulae of $\FO(\dep{\dots})$ are {\em closed downwards}, i.e., $\calA\models_X\varphi$ and $Y\subseteq X$ implies $\calA\models_Y\varphi$, formulae of $\FO(\subseteq)$ are {\em closed under unions}, i.e., $\calA\models_X\varphi$ and $\calA\models_Y\varphi$ implies $\calA\models_{X\cup Y}\varphi$, and formulae of any of these logics have the {\em empty team property}, i.e., $\calA\models_\emptyset\varphi$ always holds.

The above atoms expressing team properties can be generalized, as we will do next. Let us first recall below the definition of generalized quantifiers, where we follow the notations from \cite{DBLP:conf/mfcs/KontinenKV16,DBLP:journals/jolli/Kuusisto15}.
\begin{definition}
	Let $i_1, \dots, i_n$ ($n>0$) be a sequence of positive integers, and $\sigma$ a vocabulary consisting of an $i_j$-ary relation symbol for each $1\leq j\leq n$. A \emph{generalized quantifier} of type $(i_1, \dots, i_n)$ is a class $\mathcal{C}$ of $\sigma$-structures $(A,B_1, \dots, B_n)$ such that the following conditions hold:
	\begin{enumerate}
		\item $A\neq \emptyset$ and for each $1\leq j\leq n$, we have $B_j\subseteq A^{i_j}$.
		\item $\mathcal{C}$ is closed under isomorphisms, that is, if $(A',B_1', \dots, B_n')\in \mathcal{C}$ is isomorphic to $(A,B_1, \dots, B_n)$, then $(A', B_1',\dots, B_n')\in\mathcal{C}$.		
	\end{enumerate}
 
Let $Q$ be a generalized quantifier of type $(i_1,\dots,i_n)$. 
Let us extend the syntax of first-order logic with an expression $A_Q(\tu{x_1},\dots, \tu{x_n})$, where each $\tu{x_j}$ is a tuple of variables of length $i_j$ and $\textrm{Vars}(\tu{x_i})$ is the set of variables in $x_i$. We call $A_Q$ a \emph{generalized (dependency) atom} (of type $(i_1, \dots, i_n)$), and its team semantics is defined as:
\[
\mathcal{A} \models_X A_Q(\tu{x_1},\dots, \tu{x_n}) \textrm{ \ifff }
(\rel(\restr{X}{\textrm{Vars}(\tu{x_1})}), \dots, \rel(\restr{X}{\textrm{Vars}(\tu{x_n})})) \in Q^{\mathcal{A}},\]
where $Q^{\mathcal{A}}=\{(B_1, \dots, B_n)\mid (\textnormal{dom}(\mathcal{A}), B_1, \dots, B_n) \in Q \}$.
\end{definition}

We say that a generalized dependency atom $A_Q$ is $\NP$-definable if there is an $\NP$-algorithm that decides for a given structure $\mathcal{A}$ and a given team $X$ whether $\mathcal{A} \models_X A_Q(\tu{x_1},\dots, \tu{x_n})$ holds or not. A set $A$ of generalized atoms is $\NP$-definable if every $a \in A$ is $\NP$-definable. For example, the set $A=\{\dep{\dots}, ⊥, \subseteq\}$ is $\NP$-definable.

Many results in this paper are based on the expressive power of the logics defined above, that we shall now recall.
We first recall some notions and notations. Existential second-order logic ($\Sigma^1_1$) consists of formulae of the form
\(\exists R_1\dots\exists R_k\varphi,\) 
where $\varphi$ is an \FO-formula. Let $\sigma$ be a vocabulary. We write $\sigma(R)$ for the vocabulary that arises by adding a fresh relation symbol $R$ to σ, and we sometimes write $\varphi(R)$ to emphasize that the relation symbol $R$ occurs in the $\sigma(R)$-formula $\varphi$. If $\calA$ is a $\sigma$-structure, we write $(\mathcal{A},Q)$ for $\mathcal{A}$ expanded into a $\sigma(R)$-structure where the new $k$-relation symbol $R$ is interpreted as $Q\subseteq\textrm{dom}(\calA)^k$.
 A $\sigma(R)$-sentence $\varphi(R)$ of $\Sigma^1_1$ is said to be \emph{downward monotone} with respect to $R$ if 
$(\calA,Q)\models\varphi(R)\text{ and }Q'\subseteq Q$ imply $(\calA,Q')\models\varphi(R)$.
It is known that $\varphi(R)$ is downward monotone with respect to $R$ if and only if $\varphi(R)$ is equivalent to a sentence where $R$ occurs only negatively (see e.g., \cite{DBLP:journals/jolli/KontinenV09}).

\begin{theorem}[see \cite{DBLP:journals/apal/Galliani12,DBLP:journals/jolli/KontinenV09,DBLP:conf/csl/GallianiH13}]\label{ind2sigma11}\label{expr}
\ \\\vspace{-8pt}
\begin{enumerate}
\item\label{sig=ind}
 For every $\sigma$-formula $\varphi$ of  $\FO(\perp)$, there is an $\sigma(R)$-sentence $\psi(R)$ of $\Sigma^1_1$ such that for all $\sigma$-structures $\calA$ and teams $X$,
\begin{equation}\label{sig11_ind_d}
\calA\models_X\varphi\iff (\calA,\textnormal{rel}(X))\models\psi(R).
\end{equation}
Conversely, for every $\sigma(R)$-sentence $\psi(R)$ of $\Sigma^1_1$, there is a $\sigma$-formula $\varphi$ of $\FO(\perp)$ such that (\ref{sig11_ind_d}) holds for all $\sigma$-structures $\calA$ and {\em non-empty} teams $X$. 
\item\label{dep=esor-} The same as the above holds for formulae of $\FO(\dep{\dots})$ as well, except that in both directions for $\FO(\dep{\dots})$ the relation symbol $R$ is assumed to occur only negatively in the sentence $\psi(R)$.
\item In particular, over sentences both $\FO(\perp)$ and $\FO(\dep{\dots})$ are expressively equivalent to $\Sigma^1_1$, in the sense that every $\sigma$-sentence of $\FO(\perp)$ (or $\FO(\dep{\dots})$) is equivalent to a $\sigma$-sentence $\psi$ of $\Sigma^1_1$, i.e., for any $\sigma$-structure $\calA$,
\[\calA\models\varphi\iff\calA\models\psi,\]
and vice versa. As a consequence of Fagin's Theorem (see \cite{fagingeneralized}), over finite structures both $\FO(\perp)$ and $\FO(\dep{\dots})$ capture \NP.
\item\label{inc=lfp} For any $\sigma$-formula $\varphi(x_1,\dots,x_k)$ of $\FO(\subseteq)$, there exists a $\sigma(R)$-formula $\psi(R)$ of positive greatest fixed point logic ($\classFont{posGFP}$) such that for all $\sigma$-structures $\calA$ and teams $X$,
\[\calA \models_X \varphi\iff \calA,\textnormal{rel}(X) \models_s \psi(R)\text{ for all }s \in X;\]
and vice versa.
In particular, over sentences $\FO(\subseteq)$ is expressively equivalent to $\classFont{posGFP}$. As a consequence of Immerman's Theorem (see \cite{DBLP:journals/iandc/Immerman86}), over finite structures, $\FO(\subseteq)$ is expressively equivalent to least fixed point logic ($\classFont{LFP}$). Thus, by \cite{DBLP:journals/iandc/Immerman86,DBLP:conf/stoc/Vardi82}, over ordered finite structures, $\FO(\subseteq)$ captures $\P$.
\item\label{myopic} 
Let $\varphi(R)$ be a \emph{myopic} $\sigma$-formula, that is, $\varphi(R)=\forall \tu{x} (R(\tu{x})\rightarrow \psi(R,\tu{x}))$, where $\psi$ is a first order $\sigma$-formula with only positive occurrences of $R$. Then there exists a $\sigma$-formula $\chi \in \FO(\subseteq)$ such that for all $\sigma$-structures $\mathcal{A}$ and all teams $X$:
\[\mathcal{A} \models_X \chi(\tu{x}) \Leftrightarrow \mathcal{A}, \textnormal{rel}(X) \models \varphi(R). \]
 \end{enumerate}
\end{theorem}

\paragraph*{Propositional and Quantified Boolean formulae}

In this paper, we will also consider certain classes of propositional and quantified Boolean formulae. As usual, we use \cnf to denote the class of propositional formulae in conjunctive normal form and $k\classFont{-}\cnf$ to denote the class of propositional formulae in conjunctive normal form where each clause contains at most $k$ literals. A formula in $\cnf$ is in the class $\textnormal{DualHorn}$ if each of its clauses contains at most one negative literal.

For a class $\mathcal{C}$ of Boolean formulae, we denote by $\Sigma_1\mathcal{C}$ the class of quantified Boolean formulae in prenex normal form with only existential quantifiers where the quantifier-free part is an element of $\mathcal{C}$.

For a class $\mathcal{C}$ of quantified Boolean formulae we denote with $\mathcal{C}^+$(resp. $\mathcal{C}^-$) the class of formulae in $\mathcal{C}$ whose free variables occur only positively (resp. negatively). For example, $\stcnf^-$ consists of all quantified Boolean formulae in prenex normal form with only existential quantifiers, where the quantifier-free part is in $\tcnf$ and the free variables occur only negatively. Note that in Boolean formulae all variables are free.

\paragraph*{Counting Problems and Counting Classes}

This paper aims to identify model-theoretic characterizations of counting classes in terms of team-based logics. Let us now recall relevant previous results on the descriptive complexity of counting problems.
We begin by defining the most important complexity classes for counting problems.
\begin{definition}
A function $f \colon \{0,1\}^* \rightarrow \mathbb{N}$ is in \cP if there is a non-deterministic polynomial time Turing machine $M$ \stfa inputs $x \in \{0,1\}^*$,
\begin{center}
$f(x)$ is the number of accepting computation paths of $M$ on input $x$.
\end{center}
\end{definition}
This definition can be generalized as follows.
\begin{definition} Let $\calC$ be a complexity class. 
  A function $f \colon \{0,1\}^* \rightarrow \mathbb{N}$ is said to be in $\#\cdot \calC$ if there are a language $L \in \calC$ and a polynomial $p$ such that for all $x \in \{0,1\}^*$:
  \[	f(x) = |\{y \mid |y| \leq p(|x|) \textrm{ and } (x,y) \in L\}|.\]
\end{definition}

Obviously $\cP= \#\cdot \P$, and it is well known that $\cP\subseteq \#\cdot  \NP \subseteq \#\cdot  \coNP = \cP^{\NP}$, where, under reasonable complexity-theoretic assumptions, all these inclusions are strict; see \cite{DBLP:journals/sigact/HemaspaandraV95} for a survey of these issues. 

\begin{definition}
A function $f \colon \{0,1\}^* \rightarrow \mathbb{N}$ is in \totP if there is a non-deterministic polynomial time Turing machine $M$ \stfa inputs $x \in \{0,1\}^*$,
\begin{center}
	$f(x)$ is the number of computation paths of $M$ on input $x$ minus $1$.
\end{center}
\end{definition}
Subtracting 1 from the number of computation paths is neccessary since otherwise \totP-functions could never map to 0. In \cite{DBLP:conf/mfcs/PagourtzisZ06} it was shown that $\totP$ is the closure \wrt parsimonious reductions of self-reducible counting problems from $\cP$ whose decision version is in $\P$. It follows that $\totP \subsetneq \cP$ unless $\P=\NP$.

Next, we define the relevant logical counting classes.
\begin{definition}\label{def-forel}
A function $f\colon \{0,1\}^* \rightarrow \mathbb{N}$ is said to be in $\cFO^\rel$ if there is a vocabulary $\sigma$ with a built-in linear order $\leq$, and an \FO-formula $\varphi(R_1, \dots, R_k, x_1, \dots, x_\ell)$ over $\sigma$ with free relation variables $R_1, \dots, R_k$ and free individual variables $x_1, \dots, x_\ell$ \stfa $\sigma$-structures $\calA$,
\begin{align*}
f(\enc_\sigma(\calA)) = |\{(S_1, \dots, S_k&, c_1, \dots, c_\ell):
  \calA \models \varphi(S_1, \dots, S_k, c_1, \dots, c_\ell\}|.
\end{align*}
If the input of $f$ is not of the appropriate form, we assume the output to be $0$.
\end{definition}
In the same fashion, subclasses of $\cFO^\rel$, such as $\cSigmaiRel{k}$ and $\cPiiRel{k}$ for arbitrary $k$, are defined by assuming that the formula $\varphi$ in the above definition is in the corresponding fragments $\Sigma_k$ and $\Pi_k$.

Recall the relationship between the above defined logical counting classes and \cP:

\begin{theorem}[\cite{DBLP:journals/jcss/SalujaST95}]
$\cSigmaiRel{0} = \cPiiRel{0} \subset \cSigmaiRel{1} \subset \cPiiRel{1} \subset \cSigmaiRel{2} \subset \cPiiRel{2} = \cFO^\rel = \cP$.\\
Furthermore, $\cSigmaiRel{0} \subseteq \classFont{FP}$.
\end{theorem}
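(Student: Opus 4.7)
The plan is to prove the chain from both ends: the extremal equalities follow from descriptive complexity arguments, the containments $\subseteq$ in between are syntactic triviality, and the strict separations require exhibiting explicit counting problems together with closure properties that hold in one class but fail in the other.

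First I would establish $\cSigmaiRel{0} = \cPiiRel{0}$ trivially: both classes count satisfying assignments to free relation and individual variables in quantifier-free formulae, and $\Sigma_0 = \Pi_0$ by definition. The containments $\cSigmaiRel{k} \subseteq \cPiiRel{k+1}$ and $\cPiiRel{k}\subseteq \cSigmaiRel{k+1}$ then follow because any $\Sigma_k$-formula is, up to a dummy leading universal quantifier over a fresh variable, a $\Pi_{k+1}$-formula, and this transformation leaves the set of satisfying interpretations (and hence the count) unchanged.

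For the equality $\cPiiRel{2} = \cFO^\rel = \cP$ at the top, the inclusion $\cFO^\rel \subseteq \cP$ is immediate: given $\varphi(R_1,\dots,R_k,x_1,\dots,x_\ell)$ and a structure $\calA$, a nondeterministic polynomial-time machine guesses $(S_1,\dots,S_k,c_1,\dots,c_\ell)$ and verifies $\calA\models\varphi(\tu S,\tu c)$ in polynomial time, so the number of accepting paths equals the desired count. The converse $\cP \subseteq \cPiiRel{2}$ is the counting analogue of Fagin's theorem: an accepting computation of a polynomial-time NTM $M$ on input $\enc_\sigma(\calA)$ can be coded by a relation $T$ on $\textrm{dom}(\calA)$ indexing the computation tableau (cells and transitions); correctness of $T$ as an accepting computation is expressible by a $\Pi_2$-formula (universal quantifiers over cell coordinates, existentials witnessing the guessed transitions), and distinct accepting paths correspond to distinct relations $T$, so the function counting these satisfying $T$'s lies in $\cPiiRel{2}$.

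The strict inclusions in the middle are the delicate part, and constitute the main obstacle. I would follow the technique of Saluja, Subrahmanyam and Thakur: exhibit canonical complete problems for each class (\eg\ $\cDnf$ is $\cSigmaiRel{1}$-complete and $\#\cnf$ is $\cPiiRel{1}$-complete) and separate them by identifying a syntactically forced closure property of the smaller class that the complete problem of the larger class provably lacks. For instance, $\cSigmaiRel{1}$ is closed under \emph{easy multiplication by a constant depending only on the input size} and under polynomial-sum approximations that fail for $\#\cnf$, placing $\#\cnf\notin\cSigmaiRel{1}$; analogous closure arguments separate $\cPiiRel{1}$ from $\cSigmaiRel{2}$, and $\cSigmaiRel{2}$ from $\cPiiRel{2}$. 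The crux, and the part I expect to be genuinely technical, is pinpointing the right invariant: each separation has to identify a structural property of formulae in the smaller prefix class that is preserved by the counting operator and demonstrably violated by a complete problem of the immediately larger class.
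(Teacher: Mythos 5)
First, a remark on the comparison itself: the paper does not prove this statement at all --- it is imported verbatim from Saluja, Subrahmanyam and Thakur, so there is no in-paper argument to measure you against. Your treatment of the boundary of the chain is correct and is essentially the standard one: $\Sigma_0=\Pi_0$ syntactically, dummy quantifiers give all the non-strict containments, $\cFO^\rel\subseteq\cP$ follows by guess-and-check since the data complexity of \FO is polynomial, and $\cP\subseteq\cPiiRel{2}$ follows from a parsimonious Fagin-style tableau encoding (one must verify that satisfying relations and accepting paths are in exact bijection, i.e., that the $\Pi_2$ formula admits no spurious satisfying relations, but this is routine).

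The genuine gap is in the three strict inclusions, which are the entire content of the theorem: you defer them to ``pinpointing the right invariant,'' and the invariants you gesture at are not the right ones and would not work as stated. ``Closure under easy multiplication by a constant'' and ``polynomial-sum approximations'' do not separate $\cSigmaiRel{1}$ from $\cPiiRel{1}$; moreover, arguing via ``complete problems'' of these classes is shaky, because the prefix classes $\cSigmaiRel{k}$ and $\cPiiRel{k}$ are not known to be closed under the standard counting reductions, so completeness is not the right vehicle. What is actually needed (and what the cited paper does) is a direct, unconditional argument that some explicit function lies in the larger class yet violates a structural property shared by \emph{all} functions of the smaller class. For instance, the separation of $\cSigmaiRel{1}$ from $\cPiiRel{1}$ rests on a gap lemma: for $f$ defined by $\exists\tu{y}\,\psi$ with $\psi$ quantifier-free, a single witness tuple constrains only a bounded number $d$ of tuples of the guessed relations, so on every structure either $f=0$ or $f\geq 2^{N-d}$, where $N$ is the total number of potential tuples; a function such as $\#\bcnf$, which takes the value $1$ on arbitrarily large instances, therefore cannot lie in $\cSigmaiRel{1}$. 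The higher-level separations require different structural lemmas of the same flavour, and the memberships also need care: $\#\cnf$ with unbounded clause width is not obviously in $\cPiiRel{1}$, since asserting ``some literal of clause $c$ is satisfied'' already calls for an existential quantifier over literal positions. As written, your proposal establishes only the easy equalities and inclusions and leaves the separations, i.e., the substance of the theorem, unproved.
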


Complete problems (i.e., hardest problems) for counting classes have also been studied extensively. Let us now recall three reductions that are relevant in this study. 
Let $f$ and $h$ be counting problems. We say that $f$ is \textit{parsimoniously} reducible to $h$ if there is a polynomial-time computable function $g$ such that $f(x)=h(g(x))$ for all inputs $x$, $f$ is \textit{Turing} reducible to $h$ if $f \in \fP^h$, and $f$ is \textit{metrically} reducible to $h$ if there are polynomial-time computable functions $g_1,g_2$ such that $f(x)=g_2(h(g_1(x)),x)$ for all inputs $x$. Clearly, metric reductions are  Turing reductions with one oracle query. Besides these three familiar reductions we now define another type of reductions, called {\em first-order reductions}. First recall that for any two vocabularies $σ_1, σ_2$, an {\em FO-interpretation} (or a {\em first-order query}) is a function $I : \struc[\sigma_1] \to \struc[\sigma_2]$, represented as a tuple $I = (\varphi_0, \varphi_{R_1}, \dots,\varphi_{R_\ell})$ of FO-formulae over $\sigma_1$ with $k$ free variables, that maps any structure $\mathcal{A} \in \struc[σ_1]$ to another structure $I(\mathcal{A}) \in \struc[σ_2]$, whose domain is a subset of $\textrm{dom}(\mathcal{A})^k$ (i.e., a set of $k$-ary tuples of elements in $\mathcal{A}$) defined by $\varphi_0$ and relations $R_i$ are defined by $\varphi_{R_i}$ (see \cite{DBLP:books/daglib/0095988} for detailed discussion). In the  team semantics case, we also need to define how teams  are transformed by the interpretation $I$.  The value $I(X)$ is defined in a straightforward way:  individual elements from $\calA^{k\cdot m}$ in $X$ are mapped  to elements from $I(\calA)^m$ in $I(X)$, where $k$ is the arity of tuples  in the domain of the structure $I(\calA)$, and $m$ is the arity of the team $I(X)$. Now, we  define first-order reductions via  FO-interpretations as follows:
\begin{definition}\label{fored}
	Let $f_1,f_2$ be two functions. We say $f_1$ is first-order reducible or FO-reducible to $f_2$ (denoted $f_1 \le_{\text{fo}} f_2$) if there are vocabularies $\sigma_1, \sigma_2$ with $\sigma_2=(R_1^{a_1}, \dots, R_l^{a_l})$ and an FO-interpretation $I = (\varphi_0, \varphi_{R_1}, \dots,\varphi_{R_\ell})$, where $\varphi_0, \varphi_{R_1}, \dots,\varphi_{R_\ell}$ are \FO-formulae over $\sigma_1$, such that for all $\calA_1 \in\struc[\sigma_1]$, there are $k \in \mathbb{N}$ and $\calA_2\in\struc[\sigma_2]$ with
	\[\textrm{dom}(\calA_2)=\{(x^1, \dots,x^k)\mid \calA_1 \models \varphi_0(x^1,\dots,x^k)\}\]
	and for all $i\le k$ 
	\[R_i((x^1_1,\dots, x_1^k), \dots, (x_{a_i}^1, \dots, x^{k}_{a_i})) \Leftrightarrow \calA_1 \models \varphi_{R_i}(x^1_1,\dots, x_1^k, \dots, x_{a_i}^1, \dots, x^{k}_{a_i})\]
	and $f_1(\enc_{\sigma_1}(\calA_1))=f_2(\enc_{\sigma_2}(\calA_2)).$
\end{definition}

It is often possible to find complete problems in counting classes by counting satisfying assignments for certain (quantified) Boolean formulae. Let $\mathcal{F}$ be a class of quantified Boolean formulae. Define the problem $\#\mathcal{F}$ as follows:
\problemdef{$\#\mathcal{F}$}{Formula $φ \in \mathcal{F}$}{Number of satisfying assignments of φ}
For example, $\#\mathrm{SAT}$, the function counting the number of satisfying assignments for propositional formulae, as well as its restriction $\#\tcnf$, are complete for \cP under parsimonious reductions, while $\#\bcnf^+$ and $\#\bcnf^-$ are complete for $\cP$ under Turing reductions. Observe that for all $\scnf$-formulae $\varphi$ it holds that the number of satisfying assignments is equal to that of the formula $\widetilde{\varphi}$  obtained by negating all literals in all clauses in $\varphi$. Thus, $\#\scnf^+$ and $\#\scnf^-$ are in a sense the same problem. In fact all our results for $\#\mathcal{C}^+$ (for a class of formulae $\mathcal{C}$) also hold for $\#\mathcal{C}^-$ and vice versa. This also holds for $\#\textnormal{Horn}$ and $\#\textnormal{DualHorn}$. Note that Aziz et al \cite{DBLP:conf/sat/AzizCMS15} studied the problem $\#\ssat$ under the name \emph{projected model counting} and observed that it is contained in $\#\cdot \NP$.

Next we introduce the central class for this paper, a class of counting problems in the context of team-based logics.
For any set $A$ of generalized dependency atoms, we define $\cTeam{\FO(A)}$ to consist of those functions counting non-empty satisfying teams for \mbox{$\FO(A)$-formulae}. Note that by the empty team property of dependence, independence, and inclusion logic formulae any function that counts all satisfying teams (including the empty team) could not attain the value 0.

\begin{definition}
	For any set $A$ of generalized atoms, $\cTeam{\FO(A)}$ is the class of all functions $f \colon \{0,1\}^* \to ℕ$ for which there is a vocabulary σ and an $\FO(A)$-formula $φ(\tu x)$ over σ with a tuple $\tu x$ of free first-order variables such that for all $\sigma$-structures $\calA$ with a built-in linear order $\leq$, addition $+$ and multiplication $\times$, 
\[f(\enc_σ(\calA)) = |\{X \in \textrm{team}(\calA, (\tu x)) : X\neq \emptyset \textrm{ and } \calA \models_X φ(\tu x)\}|.\]	

For inputs that do not encode a structure with the built-in predicates the function values are always 0.
We denote by $f_φ$ the function defined by φ.
\end{definition}

\begin{example}
As an example for how to work with team semantics in a counting context, we show that the $\cP$-complete problem $\#\bcnf^+$ is contained in both $\cTeam{\FO(\dep\dots)}$ and $\cTeam{\FO(\subseteq)}$.
Let $\varphi(x_1, \dots, x_n)= \bigwedge C_i \in \bcnf^+$, where each $C_i = \ell_{i,1} \vee \ell_{i,2}$ and $\ell_{i,j}\in\{x_1, \dots, x_n\}$. Consider the vocabulary $\tau_{\bcnf^+}=\{C^2\}$. We encode the formula $\varphi(x_1, \dots, x_n)$  by the structure $\mathcal{A}=(\{x_1, \dots, x_n\}, C^{\mathcal A})$, where $(x,y) \in C^{\mathcal A}$ \ifff the clause $x \vee y$ occurs in $\varphi$.

We show that $\#\bcnf^+ $ can be defined by counting non-empty teams (which correspond to assignments mapping at least one variable to true) satisfying suitable formulae from $\FO(\subseteq)$ as well as $\FO(\dep\dots)$. For this purpose, we encode Boolean assignments to the variables $x_1, \dots, x_n$ by teams over one variable $t$. Since the universe of our structures is exactly the set of variables of the formula $\varphi(x_1,\dots,x_k)$ in question, assignments to the variables can be encoded by inclusion of the values of the variables $x_i$ in $t$ in the team.

Now, the following $\FO(\subseteq)$-formula defines $\#\bcnf^+$:
\[φ_\subseteq(t) = \forall x \forall y (\neg C(x,y) \vee x \subseteq t \vee y \subseteq t).\]
Intuitively, this formula states that if a pair $(x,y)$ of variables in $\varphi$ occur in the same clause (i.e., $C(x,y)$ holds), then the value of one of the two variables $x,y$ is contained in the team, or it is set to true.

To define the same problem in $\FO(\dep\dots)$ where inclusion atoms are not any more available in the language, we need to encode assignments differently.
We now encode variables $x_i$ being set to 1 by not including them in the team over the variable $t$. The $\FO(\dep\dots)$-formula that defines $\#\bcnf^+$ is the following:
\begin{align*}
φ_{=\,\!(\dots)}(t) =& ∃\textrm{min}∀z \ \textrm{min} \leq z ∧\\
&∀x ∀y ∃x' ∃y' \Big( \dep{x,x'} ∧ \dep{y,y'} ∧ (\neg x=y ∨ x' = y') ∧\\
&\phantom{∀x ∀y ∃x' ∃y' \Big(} (x \neq t \lor x' = \textrm{min}) ∧ (\neg C(x,y) ∨ x' \neq \textrm{min} ∨ y' \neq \textrm{min})\Big).
\end{align*}
Intuitively, in the above formula we use existential quantifications together with dependence atoms to express that $x'$ is a function of $x$, and this function $f$ is guaranteed in the formula to be consistent with the assignment encoded by the team. We shall interpret a function mapping to the minimal element of the universe (encoded by the variable \textrm{min} in the formula) as an assignment to 0, and a function mapping to any other element as an assignment to 1. Now, the second last conjunct in our formula states that all variables that occur as values of $t$ in the team (which correspond to those $x_i$ set to 0) are mapped by our function $f$ to the minimal element. Finally, the last conjunct in our formula checks whether the $\bcnf$-formula is satisfied by the assignment encoded by $f$. Note that in order to talk about the assignment to two variables simultaneously, in the above formula we actually use two (equivalent) functions to encode the same assignment. 
\end{example}

\section{A Characterization of the Class $\#\cdot \NP$}

In this section, we characterize the class $\#\cdot \NP$ in terms of team-based logics. Our first result shows that $\#\cdot \NP$ is the largest class attainable by counting teams in team-based logics $\FO(A)$, as long as all generalized atoms in $A$ are NP-definable.

\begin{theorem}\label{IinNP}
	For any set $A$ of $\NP$-definable generalized atoms, $\cTeam{\FO(A)} \subseteq \#\cdot \NP.$
\end{theorem}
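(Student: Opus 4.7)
The plan is to represent each non-empty satisfying team directly as the certificate $y$ in the definition of $\#\cdot\NP$, and then show that the predicate ``$y$ encodes a non-empty team $X$ with $\calA\models_X\varphi$'' is decidable in \NP.

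First I would set up the encoding. Fix an $\FO(A)$-formula $\varphi(\tu x)$ with $|\tu x|=k$, a vocabulary $\sigma$, and a $\sigma$-structure $\calA$ of universe size $n$. Since a team with domain $(x_1,\dots,x_k)$ is a subset of $\textrm{dom}(\calA)^k$, it can be encoded by its characteristic bit string of length $n^k$, which is polynomial in $|\enc_\sigma(\calA)|$. Define the language
\[L\defeq\{(\enc_\sigma(\calA),\enc(X))\mid X\neq\emptyset\text{ and }\calA\models_X\varphi(\tu x)\}.\]
Provided $L\in\NP$, the polynomial $p(m)\defeq m^k$ witnesses $f_\varphi\in\#\cdot\NP$ directly from the definition, since the certificates $y$ are in bijection with the non-empty satisfying teams of $\varphi$ over $\calA$.

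The main work is therefore showing that team-semantics model checking for $\FO(A)$ with $\NP$-definable atoms is in \NP, which I would do by structural induction on a subformula $\psi$, keeping the team $X$ as part of the input. For literals the flat semantics lets us check each assignment in $X$ individually in polynomial time; for generalized atoms $A_Q\in A$ we invoke the assumed \NP-algorithm. Conjunctions recurse on both sides. Disjunctions are handled by nondeterministically guessing a decomposition $X=Y\cup Z$ (encodable in $2|X|$ bits) and recursing on $(Y,\varphi)$ and $(Z,\psi)$. Existential quantifiers are handled by guessing a supplementing function $F\colon X\to\mathcal P(\textrm{dom}(\calA))\setminus\{\emptyset\}$ of polynomial description size and recursing on $(X[F/x],\psi)$. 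Universal quantifiers deterministically pass to the team $X[A/x]$.

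The obstacle that needs attention is bounding the size of the teams that appear during the recursion. Starting from $|X|\leq n^k$ and processing a formula of quantifier depth $d$, existential and universal steps multiply the team size by at most $n$, and the splits under disjunctions only shrink teams, so every intermediate team has size at most $n^{k+d}$, which is polynomial in the input because $d$ is a constant depending only on $\varphi$. Hence all nondeterministic guesses are of polynomial length, all polynomially many subcomputations (including the \NP-oracle queries for the atoms in $A$, which can be inlined since \NP is closed under polynomially many \NP-subroutines in this way) fit into a single polynomial-time verifier operating on one combined guess. This gives $L\in\NP$ and completes the proof.
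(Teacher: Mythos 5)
Your proposal is correct and follows essentially the same route as the paper: the team is taken as the $\#\cdot\NP$ certificate, and satisfaction is verified in \NP by guessing the subteam decompositions for disjunctions and the supplementing functions for existential quantifiers, invoking the assumed \NP-algorithm for the generalized atoms. The only difference is that you spell out details the paper leaves implicit (the bijective encoding of teams as certificates and the polynomial bound $n^{k+d}$ on intermediate team sizes), which is fine.
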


\begin{proofsketch}
  Let $φ(\tu x) \in \cTeam{\FO(A)}$. To show that $f_φ \in \#\cdot\NP$ we note that $f_φ$ can be computed by counting on input $\enc_\sigma(\mathcal{A})$ the number of teams $X$ such that $\mathcal{A} \models_X φ(\tu x)$. It is thus sufficient to show that the letter can be checked in \NP on input $(\enc_\sigma(\mathcal{A}), X)$.
In this proof, the only places that involve nondeterminism  are disjunctions (guess the split), existential quantifiers (guess the supplementing function) and \NP-definable generalized atoms (checkable in \NP by definition).
\end{proofsketch}

Next, we prove the converse inclusion of the above theorem by proving a stronger result: The whole class $\#\cdot \NP$ can actually be captured by a single generalized atom, the independence atom. 

\begin{theorem}\label{NPinI}
$\#\cdot \NP \subseteq \cTeam{\FO(⊥)} .$
\end{theorem}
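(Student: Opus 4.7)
The plan is to combine Fagin's theorem with Theorem \ref{expr}(\ref{sig=ind}) in the standard way, exploiting the canonical bijection between non-empty teams of fixed domain and non-empty relations of matching arity. Given $f \in \#\cdot\NP$ witnessed by an \NP-language $L$ and polynomial $p$, I would fix an input $x$ and encode it as a $\sigma$-structure $\calA_x$ with universe $\{0,\dots,n-1\}$. I would then choose an arity $k$ large enough that every potential witness $y$ (a binary string of length $\leq p(|x|)$) is injectively encodable as a $k$-ary relation $R_y$ on this universe: the built-in linear order supplies a canonical indexing of $k$-tuples, letting a relation encode a bit-string of length up to $n^k$, which covers $p(n)$ for all sufficiently large $n$ once $k$ exceeds the degree of $p$.

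Next, I would invoke Fagin's theorem to produce a $\Sigma^1_1$-sentence $\psi(R)$ over $\sigma(R)$ (with $R$ of arity $k$) satisfying $(\calA_x, S) \models \psi(R)$ iff $S = R_y$ for some $y$ with $|y| \leq p(|x|)$ and $(x,y) \in L$. To rule out the empty relation and make the count over non-empty teams come out exactly right, I would build into the encoding the requirement that a fixed ``marker'' tuple (e.g.\ the lex-smallest $k$-tuple) always lie in $R$, so that every valid $R_y$ is non-empty. Applying Theorem \ref{expr}(\ref{sig=ind}) to $\psi(R)$ then yields an $\FO(\bot)$-formula $\varphi(x_1,\dots,x_k)$ with $\calA_x \models_X \varphi \iff (\calA_x,\textrm{rel}(X)) \models \psi(R)$ for every non-empty team $X$ with domain $\{x_1,\dots,x_k\}$. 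Since $X \mapsto \textrm{rel}(X)$ is a bijection between non-empty teams of that fixed domain and non-empty $k$-ary relations on the universe, the count of non-empty teams satisfying $\varphi$ in $\calA_x$ equals the count of valid $R$'s, which equals $f(x)$, placing $f$ in $\cTeam{\FO(\bot)}$.

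The main obstacle will be the encoding step itself. What is required is an exact bijection between the $y$'s being counted by $f$ and the $R$'s being counted on the logical side, so I must simultaneously handle (i) the mismatch between the polynomial witness length $p(n)$ and the available $n^k$ bit positions, addressed by padding $y$ with a self-delimiting length indicator, and (ii) the empty-team versus empty-relation pitfall, addressed by the forced marker tuple that makes the empty relation never a valid witness. Once this encoding is in place, the remainder is a direct invocation of Fagin's theorem and the translation from $\Sigma^1_1$-sentences to $\FO(\bot)$-formulae supplied by Theorem \ref{expr}(\ref{sig=ind}).
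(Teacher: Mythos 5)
Your proposal is correct and follows essentially the same route as the paper: encode the witness $y$ as a $k$-ary relation over the input structure (choosing $k$ so that $n^k$ covers $p(n)$ and normalizing the witness length), apply Fagin's theorem to obtain a $\Sigma^1_1$-sentence $\psi(R)$ whose satisfying relations are in bijection with the witnesses, fix the empty-relation/empty-team mismatch by forcing every valid relation to be non-empty, and then invoke Theorem~\ref{expr}.\ref{sig=ind} to pass to an $\FO(\bot)$-formula counted over non-empty teams. The only cosmetic differences are that the paper factors the argument through the intermediate identity $\#\Sigma_1^1 = \cTeam{\FO(\bot)}$ and handles the length mismatch by pinning the trailing bits of $y$ to $0$ rather than by a self-delimiting length indicator; both devices serve the same purpose.
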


\begin{proofsketch}
  It is sufficient to show $\#\cdot \NP \subseteq \#\Sigma_1^1$ since by Theorem~\hyperref[sig=ind]{\ref*{ind2sigma11} item \ref*{sig=ind}} we have $\#\Sigma_1^1 = \cTeam{\FO(⊥)}$.
  
  Let $f \in \#\cdot\NP$. Then there are a polynomial $p$ and $L \in \NP$ such that
  \[f(x) = |\{y \mid |y| = p(|x|), (x,y) \in L\}|.\]
  
  \noindent We  encode tuples $(x,y)$ of strings with $|y| = p(|x|)$ as structures $\mathcal{A}_{(x,y)}$ by encoding the string $x$ as a structure $\mathcal{A}_x$ in the standard way, and  $y$ as a unique relation $R_y$ over $\dom(\mathcal{A}_x)^k$ for some $k \in \mathbb{N}$.  Finally, Fagin's theorem  gives a $\Sigma_1^1$-sentence $φ$ such that for all $x$,

  \begin{align*}
  |\{y \mid |y| = p(|x|), (x,y) \in L\}| &= |\{y \mid \mathcal{A}_{(x,y)} \models φ\}|
  = |\{R \mid \mathcal{A}_x \models φ(R)\}|.
  \end{align*}
\end{proofsketch}

\begin{remark}
	The class \cP can also be characterized by counting teams. A variant $\mathcal{L}$ of dependence logic  that defines exactly the first-order definable team properties in the sense of Theorem \ref{ind2sigma11} was introduced in  \cite{JuhaFan}.  Since $\cP = \cFO$ (see \cite{DBLP:journals/jcss/SalujaST95}), this logic $\mathcal{L}$ captures \cP. 
	We do not present the details of $\mathcal{L}$ in this paper, but only note that $\mathcal{L}$ has weaker versions of quantifiers and disjunction instead of the standard ones as defined in Section \ref{sec:prelims}.
\end{remark}

\section{Counting Teams in Dependence and Inclusion Logic}

In this section, we study the smaller classes $\cTeam{\FO(\dep{\dots})}$ and $\cTeam{\FO(\subseteq)}$. 
We begin by showing that the $\#\cdot \NP$-complete problem $\#\scnf_*^-$, defined below, is contained in $\cTeam{\FO(\dep{\dots})}$. We will show $\#\cdot \NP$-completeness for $\#\scnf_*^-$ in Theorem \ref{thm:scnfstarNumNP}.

\problemdef{$\#\scnf_*^-$}{Formula $φ(x_1, \dots, x_k) \in \cnf^-$}{Number of satisfying assignments of φ, disregarding the all-0-assignment}
Note that the \emph{all-0-assignment} is the assignment mapping each variable to $0$.

\begin{theorem}\label{def-fodep}
	$\#\scnf_*^- \in \cTeam{\FO(\dep{\dots})}$.
\end{theorem}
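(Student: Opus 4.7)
The plan is to apply Theorem~\ref{expr}(\ref{dep=esor-}), so it suffices to produce a $\tau_{\scnf^-}(R)$-sentence $\psi^*(R) \in \Sigma_1^1$ with $R$ unary and occurring only negatively that, on the encoding $\mathcal{A}$ of an input $\scnf^-$-formula $\varphi(x_1,\dots,x_k) = \exists y_1 \dots \exists y_\ell\,\bigwedge_i C_i$, describes exactly the non-zero satisfying assignments of $\varphi$. Concretely, I would set up a bijection between non-zero satisfying assignments $a$ of $\varphi$ and non-empty unary relations $R \subseteq P^{\mathcal{A}}$ via $R \mapsto a_R$, with $a_R(x_i) = 1$ iff $x_i \in R$. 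Because non-empty teams $X$ over a single-variable domain $\{z\}$ are in canonical bijection with non-empty subsets of $\textrm{dom}(\mathcal{A})$ via $X \mapsto \textrm{rel}(X)$, obtaining an $\FO(\dep{\dots})$-formula $\Phi(z)$ equivalent to $\psi^*(R)$ on non-empty teams via Theorem~\ref{expr}(\ref{dep=esor-}) will yield the desired logical characterization.

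The construction of $\psi^*(R)$ that I have in mind is the conjunction of $\forall z\,(R(z) \to P(z))$ (forcing $R \subseteq P^{\mathcal{A}}$) and an existential second-order statement $\exists S\,\forall C\,(\mathrm{Cl}(C) \to \exists v\,\theta(C,v,R,S))$, where $S$ is a unary relation encoding a Boolean assignment to the bound variables ($y_j$ is true iff $y_j \in S$), $\mathrm{Cl}(C)$ is a first-order predicate from $\tau_{\scnf^-}$ picking out the clause elements (e.g.\ $\neg P(C) \wedge \neg Q(C) \wedge \exists v \exists u\, I(C,v,u)$), and $\theta(C,v,R,S)$ says that the literal on variable $v$ in clause $C$ is satisfied by the pair $(R,S)$. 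Unfolding $\theta$ via the incidence predicate $I$ gives a three-way disjunction: $I(C,v,0) \wedge P(v) \wedge \neg R(v)$ (a negatively occurring free variable set to $0$), $I(C,v,0) \wedge Q(v) \wedge \neg S(v)$ (a negatively occurring bound variable set to $0$), and $\exists u\,(I(C,v,u) \wedge u \neq 0) \wedge Q(v) \wedge S(v)$ (a positively occurring bound variable set to $1$).

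The main obstacle is controlling the polarity of $R$, and this is exactly where the ``$^-$'' in $\scnf^-$ is indispensable: because no free variable ever occurs positively in $\varphi$, no disjunct of the form $\ldots \wedge P(v) \wedge R(v)$ is required in $\theta$, so $R$ appears in $\psi^*(R)$ only as $\neg R(z)$ and $\neg R(v)$; the polarity of $S$ is immaterial since $S$ is existentially quantified. Invoking Theorem~\ref{expr}(\ref{dep=esor-}) then furnishes an $\FO(\dep{\dots})$-formula $\Phi(z)$ such that $\mathcal{A} \models_X \Phi \iff (\mathcal{A}, \textrm{rel}(X)) \models \psi^*(R)$ for every non-empty team $X$ over $\{z\}$; composing with the two bijections above yields $f_\Phi(\enc_{\tau_{\scnf^-}}(\mathcal{A})) = \#\scnf_*^-(\varphi)$, placing $\#\scnf_*^-$ in $\cTeam{\FO(\dep{\dots})}$.
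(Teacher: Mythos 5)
Your proposal is correct and takes essentially the same route as the paper's own proof: both invoke Theorem~\ref{expr}(\ref{dep=esor-}) to reduce the task to writing a $\Sigma_1^1$ sentence with a unary, only-negatively-occurring free relation symbol encoding the assignment to the free variables, use an existentially quantified unary $S$ for the bound variables together with a universal quantification over clauses and the same three-way literal case analysis, and exploit the empty-team/empty-relation/all-$0$-assignment correspondence to account for the ``$*$''. The only (harmless) difference is that the paper additionally conjoins a first-order formula $\psi_1$ checking that the input structure is a well-formed encoding.
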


We will show that the above problem is actually complete for $\cTeam{\FO(\dep{\dots})}$ with respect to first-order reductions. First-order reductions turn out to be particularly natural in our context, as all our classes are closed under these reductions.

\begin{theorem}\label{fo-closure}
	$\cTeam{\FO(A)}$ is closed under first-order reductions for $A \subseteq \{\dep{\dots}, \perp, \subseteq\}$.
\end{theorem}

Next we show that the problem $\#\scnf_*^-$ is hard and thus complete for $ \cTeam{\FO(\dep{\dots})}$ under first-order reductions. Our proof technique is similar to that of \cite{DBLP:journals/corr/DurandKRV15}, where the data complexity of inclusion logic is shown to be polynomial.

\begin{theorem}\label{dep_completeness}
	$\#\scnf_*^-$ is complete for $\cTeam{\FO(\dep{\dots})}$ \wrt first-order reductions.
\end{theorem}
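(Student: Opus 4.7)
The plan is to reduce every $f \in \cTeam{\FO(\dep{\dots})}$ parsimoniously to $\#\scnf_*^-$, since containment is already given by Theorem \ref{def-fodep}. Let $\varphi(\tu x) \in \FO(\dep{\dots})$ with $|\tu x| = k$ witness $f$. By Theorem \ref{dep=esor-}, I obtain a $\Sigma_1^1$-sentence $\psi(R)$ over $\sigma(R)$ in which the fresh $k$-ary symbol $R$ occurs only negatively and which satisfies $\calA \models_X \varphi \iff (\calA, \textnormal{rel}(X)) \models \psi(R)$ for all $\sigma$-structures $\calA$ and all teams $X$ of domain $\tu x$. Because $X \mapsto \textnormal{rel}(X)$ is a bijection between such teams and $k$-ary relations on $\textrm{dom}(\calA)$, we get $f_\varphi(\enc_\sigma(\calA)) = |\{\mathbf R \neq \emptyset \mid (\calA, \mathbf R) \models \psi(R)\}|$. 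The goal is therefore to construct, uniformly and in polynomial time from $\calA$, an $\scnf^-$-formula $\Phi_\calA$ whose free Boolean variables $\{r_{\tu a}\}_{\tu a \in \textrm{dom}(\calA)^k}$ are in bijection with such $\mathbf R$ and whose projected satisfying $r$-assignments correspond exactly to the $\mathbf R$ that witness $\psi(R)$.

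First I would bring $\psi(R)$ into a Skolem normal form $\exists \tu S\, \forall \tu y\, \theta_0(R, \tu S, \tu y)$ with $\theta_0$ quantifier-free, by encoding each first-order existential as an additional second-order relation whose graph is pinned down by universal functionality and seriality clauses; atoms of the form $R(F(\tu y))$ arising from Skolemization are rewritten as $\bigwedge_d (\neg T_F(\tu y, d) \vee \neg R(d))$, which keeps $R$ strictly negative in $\theta_0$. Crucially, $\theta_0$ has size depending only on $\varphi$, not on $\calA$. Given $\calA$ with $\textrm{dom}(\calA) = \{0, \dots, n-1\}$, I introduce propositional variables $r_{\tu a}$ for $\tu a \in \textrm{dom}(\calA)^k$ (to remain free) and $s_{i, \tu b}$ for each $S_i$ (including the Skolem relations) and each tuple $\tu b$ of appropriate arity (to be existentially quantified). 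For every $\tu a \in \textrm{dom}(\calA)^{|\tu y|}$, I form $\theta_0[\tu a/\tu y]$ by evaluating $\sigma$-atoms against $\calA$ and substituting $R(\tu c) \mapsto r_{\tu c}$ (only negated) and $S_i(\tu d) \mapsto s_{i, \tu d}$. Since $\theta_0$ is constant-size, its CNF obtained by the naive distribution of $\vee$ over $\wedge$ is constant-size too and still contains every $r$-literal only negatively. Setting $\Phi_\calA \defeq \exists \{s_{i, \tu b}\}\, \bigwedge_{\tu a} \textnormal{CNF}(\theta_0[\tu a/\tu y])$ then yields an $\scnf^-$-formula of size polynomial in $n$, computable in polynomial time from $\enc_\sigma(\calA)$.

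For correctness, an assignment to the $r$-variables satisfies $\Phi_\calA$ iff there exist interpretations of the $s_{i, \cdot}$ witnessing $\forall \tu y\, \theta_0$, iff the corresponding relation $\mathbf R$ satisfies $(\calA, \mathbf R) \models \psi(R)$. The all-zero $r$-assignment corresponds to $\mathbf R = \emptyset$, which always satisfies $\psi(R)$ by the empty team property of $\FO(\dep{\dots})$; excluding it gives $\#\scnf_*^-(\Phi_\calA) = f_\varphi(\enc_\sigma(\calA))$, which is the required parsimonious reduction. The main obstacle is the CNF conversion step: a generic Tseitin encoding would introduce positive occurrences of the $r$-variables through the defining biconditionals and thus break the negativity requirement needed for membership in $\scnf^-$. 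Reducing first to Skolem normal form is what circumvents this, because each instantiated $\theta_0[\tu a/\tu y]$ is a quantifier-free formula of constant size, for which naive distributive CNF conversion is both polynomial and polarity-preserving.
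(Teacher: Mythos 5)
Your proof is correct, but it takes a genuinely different route from the paper's. The paper does not pass through the $\Sigma^1_1$ translation for the hardness direction at all: it instead invokes the prenex normal form for dependence logic from \cite{DBLP:books/daglib/0030191}, writing $\varphi$ as $\forall y_1\dots\forall y_k\exists y_{k+1}\dots\exists y_{k+l}(\bigwedge_t\dep{\tu{u}_t,w_t}\wedge\theta)$, and then directly propositionalizes the \emph{team semantics}: it introduces a variable $X_s$ for every partial assignment $s$ at every quantifier stage, with implication clauses $X_s\rightarrow X_{s'}$ (resp.\ $X_s\rightarrow\bigvee X_{s'}$) encoding universal (resp.\ existential) supplementation, binary negative clauses $\neg X_s\vee\neg X_{s'}$ for the dependence atoms, and unit clauses for the flat part $\theta$. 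Your construction instead treats Theorem \ref{expr} as a black box to obtain a $\Sigma^1_1$-sentence $\psi(R)$ with $R$ negative, Skolemizes, and grounds the universal matrix over the domain; the key observations that make this work --- constant-size instantiated matrices so that distributive CNF conversion is polynomial and polarity-preserving, relational encoding of Skolem functions to avoid positive occurrences of $R$, and the empty team / empty relation / all-0-assignment correspondence --- are all sound. What your approach buys is generality: it works verbatim for any team logic admitting a translation into $\Sigma^1_1$ with the free relation symbol occurring only negatively, without needing a logic-specific normal form. What the paper's approach buys is transparency about the shape of the resulting clauses (dependence atoms become binary purely negative clauses, existential supplementation becomes dual-Horn-like clauses), which is exactly what lets the authors reuse the same template for the $\#\Sigma_1\textnormal{-DualHorn}$ hardness result for inclusion logic. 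One small point worth making explicit in your write-up: inputs $w$ that do not encode a $\sigma$-structure must be mapped to a formula with no satisfying assignment besides the all-0 one, so that the reduction remains parsimonious on all of $\{0,1\}^*$; the paper glosses over this as well.
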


Having proven our results for dependence logic $\FO(\dep{\dots})$, we now turn to inclusion logic $\FO(\subseteq)$. We first prove that $\cTeam{\FO(\subseteq)} $ is a subclass of $\cP$.

\begin{theorem}\label{inc1}
	$\cTeam{\FO(\subseteq)} \subseteq \cP$.
\end{theorem}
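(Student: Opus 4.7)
The plan is a straightforward guess-and-check argument placing $f_\varphi$ directly in $\cP$. Given an $\FO(\subseteq)$-formula $\varphi(x_1,\dots,x_k)$ over a vocabulary $\sigma$ with built-in order and a $\sigma$-structure $\mathcal{A}$ of size $n$, I would have a nondeterministic polynomial-time machine guess a candidate team $X$ encoded by the relation $\textnormal{rel}(X)\subseteq\textnormal{dom}(\mathcal{A})^k$. Since $k$ is a constant depending only on the fixed formula, this relation has size $n^k$ and can be guessed bit by bit along a polynomial-length computation path, so each team corresponds to exactly one nondeterministic branch of the computation.

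The heart of the proof is the deterministic verification step: on each branch, the machine must decide whether $X\neq\emptyset$ and $\mathcal{A}\models_X\varphi$ in polynomial time. Non-emptiness is trivial. For the team satisfaction, I would invoke Theorem \ref{expr}(\ref{inc=lfp}), which supplies a $\sigma(R)$-formula $\psi(R)$ of positive greatest fixed point logic such that
\[\mathcal{A}\models_X\varphi\iff (\mathcal{A},\textnormal{rel}(X))\models_s\psi(R)\text{ for all }s\in X.\]
Over ordered finite structures, evaluating a posGFP formula (equivalently, an \classFont{LFP} formula) at a single assignment is in $\P$ by the Immerman--Vardi theorem. Since $|X|\leq n^k$, iterating this check over every $s\in X$ remains polynomial. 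The machine therefore accepts along a given branch iff it has guessed a non-empty team $X$ satisfying $\varphi$, and the number of accepting branches on input $\enc_\sigma(\mathcal{A})$ equals $f_\varphi(\enc_\sigma(\mathcal{A}))$, placing $f_\varphi\in\cP$.

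The only conceptual step beyond routine bookkeeping is the reduction of team satisfaction to an \classFont{LFP} evaluation, which is exactly what Theorem \ref{expr}(\ref{inc=lfp}) provides. Once that translation is in hand, the verification phase reduces to the classical polynomial data complexity of \classFont{LFP} on ordered finite structures, and there is no serious obstacle.
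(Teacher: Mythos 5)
Your proof is correct and follows essentially the same route as the paper: nondeterministically guess a non-empty team and verify $\mathcal{A}\models_X\varphi$ in deterministic polynomial time, with the verification justified by Theorem \ref{expr}(\ref{inc=lfp}) and the polynomial data complexity of fixed-point logic on ordered structures. The paper's proof is just a terser version of the same argument, citing the polynomial-time model-checking of $\FO(\subseteq)$ directly.
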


The above theorem naturally gives rise to the question whether $\cTeam{\FO(\subseteq)}$ actually coincides with \cP. However, we identify in the next lemma a particular property of $\cTeam{\FO(\subseteq)}$ functions, making this equivalence unlikely to hold.

\begin{lemma}\label{lem:inclLogDec}
Let $φ(\tu x) \in \FO(\subseteq)$ be a formula over a vocabulary σ. Then the language $L \defeq \{w \mid f_φ(w) > 0\}$ is in \P.
\end{lemma}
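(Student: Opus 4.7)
The plan relies on the union closure of $\FO(\subseteq)$: if any non-empty team satisfies $\varphi$, then the union $X^*$ of all satisfying teams is itself a (maximum) non-empty satisfying team. Hence $f_\varphi(w)>0$ iff $X^*\neq\emptyset$, so it suffices to decide non-emptiness of $X^*$ in polynomial time.

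By Theorem~\ref{inc=lfp}, one can translate $\varphi(x_1,\dots,x_k)$ into a $\classFont{posGFP}$-formula $\psi(R,x_1,\dots,x_k)$ in which $R$ occurs only positively (a feature explicit in the translation of~\cite{DBLP:journals/apal/Galliani12}). Identifying each team $X$ with domain $\{x_1,\dots,x_k\}$ with the relation $Q:=\textnormal{rel}(X)$, Theorem~\ref{inc=lfp} becomes
\[\mathcal{A}\models_X\varphi \iff Q\subseteq F(Q),\]
where $F(Q):=\{\tu{a}\in A^k : (\mathcal{A},Q)\models\psi(R,\tu{a})\}$. Positivity of $R$ makes $F$ monotonic, so by Knaster--Tarski, $\textnormal{rel}(X^*)=\textnormal{gfp}(F)$; in particular, $w\in L$ iff $\textnormal{gfp}(F)\neq\emptyset$.

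The greatest fixed point $\textnormal{gfp}(F)$ is then computed by iterating $F$ from $A^k$: since $F$ is monotonic, the chain $A^k\supseteq F(A^k)\supseteq F^2(A^k)\supseteq\cdots$ is descending and stabilizes within $|A|^k$ steps. Each step evaluates the $\classFont{posGFP}$-formula $\psi(R,\tu{a})$ on $(\mathcal{A},Q_i)$ for all $\tu{a}\in A^k$, which is polynomial-time over ordered finite structures since $\classFont{posGFP}$ is subsumed by $\classFont{LFP}$ and the latter captures $\P$. A polynomial-time validity check on the encoding $w$ completes the algorithm, giving $L\in\P$.

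The main technical obstacle I anticipate is guaranteeing positivity of $R$ in $\psi$, which underwrites the monotonicity of $F$. Should the cited translation not yield a syntactically $R$-positive formula, one can either invoke the explicit construction of~\cite{DBLP:journals/apal/Galliani12}, or bypass the fixed-point formalism altogether: union closure already makes the satisfying teams a finite join-semilattice with top element $X^*$, and a polynomial-time algorithm for $X^*$ can be extracted directly from the polynomial-time model checker for $\FO(\subseteq)$ guaranteed by Theorem~\ref{inc=lfp}.
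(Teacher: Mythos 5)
Your main argument is correct, but it takes a genuinely different and considerably longer route than the paper. The paper's proof is essentially a one-liner: the existence of a non-empty satisfying team for $\varphi(\tu x)$ is equivalent to $\calA(w) \models_{\{\emptyset\}} \exists \tu x\, \varphi(\tu x)$ (under the lax team semantics of $\exists$, every non-empty team over $\{\tu x\}$ arises by supplementing $\{\emptyset\}$), and $\exists\tu x\,\varphi$ is a \emph{sentence} of $\FO(\subseteq)$, so the capture of \P by inclusion logic sentences over ordered structures (Theorem~\ref{inc=lfp}) immediately gives $L\in\P$; union closure is never used. You instead unfold the fixed-point machinery behind that capture result: union closure yields a maximum satisfying team $X^*$, the translation into $\classFont{posGFP}$ yields an $R$-positive $\psi$ with $\calA\models_X\varphi$ iff $\textnormal{rel}(X)$ is a post-fixed point of the induced monotone operator $F$, and Knaster--Tarski plus descending iteration computes $\textnormal{rel}(X^*)=\mathrm{gfp}(F)$ in polynomial time. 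This is sound --- the $R$-positivity you worry about is indeed available from the cited translation (the ``myopic'' normal form of Galliani and Hella) --- and it buys slightly more than the lemma asks for, since you actually compute the witnessing team $X^*$ rather than merely deciding non-emptiness. The cost is that you re-prove a special case of the polynomial-time data complexity of $\classFont{posGFP}$/$\classFont{LFP}$ that the paper simply cites.

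One caveat: the fallback sketched in your last paragraph is not sound as stated. Union closure together with a polynomial-time membership test for ``$\calA\models_X\varphi$'' does not by itself let you ``extract $X^*$ directly from the model checker'': a union-closed family accessed only through a membership oracle can hide its unique non-empty member among exponentially many candidates (consider the family $\{\emptyset, S\}$ for an unknown non-empty $S$, which is union-closed). So if the $R$-positive translation were unavailable, you would need some other structural handle --- e.g.\ the paper's observation that non-emptiness is itself expressible by an inclusion logic sentence.
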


\begin{corollary}\label{inc2}
If $\P \neq \NP$, then $\cTeam{\FO(\subseteq)} \neq \cP$.
\end{corollary}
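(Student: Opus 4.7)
The plan is to argue by contraposition: assuming $\cTeam{\FO(\subseteq)} = \cP$, I will derive $\P=\NP$ by combining Lemma~\ref{lem:inclLogDec} with a known $\cP$-function whose positivity encodes an $\NP$-complete decision problem.

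First I would pick a canonical witness function in $\cP$ whose support is $\NP$-hard to decide. The natural choice is $\#\mathrm{SAT}$: for any propositional formula $\varphi$, we have $\#\mathrm{SAT}(\varphi) > 0$ iff $\varphi \in \mathrm{SAT}$, and SAT is $\NP$-complete. (Any other $\cP$-function with $\NP$-hard support, such as $\#\tcnf$, would do equally well.) Under the assumption $\cTeam{\FO(\subseteq)} = \cP$, the function $\#\mathrm{SAT}$ is in $\cTeam{\FO(\subseteq)}$, so by definition there is a vocabulary $\sigma$ and a formula $\varphi(\tu x)\in\FO(\subseteq)$ with $f_\varphi = \#\mathrm{SAT}$ (up to the standard encoding of inputs as $\sigma$-structures).

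Next I would apply Lemma~\ref{lem:inclLogDec} to this $\varphi$, which shows that the language $L_\varphi = \{w \mid f_\varphi(w) > 0\}$ lies in $\P$. But $L_\varphi$ is (up to a polynomial-time encoding/decoding between strings and $\sigma$-structures) exactly the set of satisfiable propositional formulas. Hence $\mathrm{SAT}\in\P$, which contradicts $\P\neq\NP$. Taking the contrapositive yields the corollary.

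I do not expect any significant obstacle: the entire content is already packaged in Lemma~\ref{lem:inclLogDec}, so the proof is essentially a one-line contrapositive argument. The only mild care needed is the bookkeeping between the string encoding used in the definition of $\cTeam{\FO(\subseteq)}$ and the native representation of propositional formulas used by $\#\mathrm{SAT}$; this is just a polynomial-time translation, so positivity transfers faithfully and the reduction from SAT to $L_\varphi$ is polynomial-time.
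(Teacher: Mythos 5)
Your proof is correct and follows essentially the same route as the paper: assume $\cTeam{\FO(\subseteq)} = \cP$, take a $\cP$-complete counting function with $\NP$-hard support (the paper uses $\#3\mathrm{CNF}$ where you use $\#\mathrm{SAT}$, an immaterial difference), and apply Lemma~\ref{lem:inclLogDec} to place that support in $\P$, contradicting $\P \neq \NP$.
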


Theorem \ref{inc1} and Corollary \ref{inc2} indicate that $\cTeam{\FO(\subseteq)}$ is most likely a strict subclass of $\cP$. Nevertheless, we show in the next theorem that $\cTeam{\FO(\subseteq)}$ contains the problem $\#\textnormal{DualHorn}_*$ (same as $\#\textnormal{DualHorn}$, but without counting the all-0-assignment) which is complete for \cP \wrt Turing reductions. It is unknown whether $\#\textnormal{DualHorn}_* \in\#\FO(\dep{\dots})$.

\begin{theorem}\label{dhininc}
	$\#\textnormal{DualHorn}_*\in \#\FO(\subseteq)$.
\end{theorem}

We continue by exhibiting a hard problem for the class $\cTeam{\FO(\subseteq)}$. It is an open question whether the problem is definable by an inclusion logic formula.

\begin{theorem}\label{sdhhard}
	$\#\Sigma_1\textnormal{DualHorn}$ is hard for $\cTeam{\FO(\subseteq)}$ \wrt first-order reductions.
\end{theorem}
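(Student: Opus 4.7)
The plan is to mirror the proof strategy of Theorem~\ref{dep_completeness}, replacing dependence atoms with inclusion atoms and targeting DualHorn rather than $\cnf^-$. Given a formula $\varphi(x_1,\ldots,x_m)\in\FO(\subseteq)$ and a structure $\mathcal A$ with universe $A=\{0,\ldots,n-1\}$, I would construct in polynomial time a $\Sigma_1\textnormal{-DualHorn}$ formula $\Gamma^{\varphi,\mathcal A}$ such that the number of non-empty teams $X$ with domain $\{x_1,\ldots,x_m\}$ satisfying $\varphi$ equals the number of satisfying Boolean assignments of $\Gamma^{\varphi,\mathcal A}$, via the correspondence $S(X_s)=1 \iff s\in X$ for the free propositional variables.

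First I invoke a suitable prenex-like normal form for $\FO(\subseteq)$ (analogous to the normal form used for dependence logic in the proof of Theorem~\ref{dep_completeness}) to rewrite $\varphi$ as
\[\forall y_1\cdots\forall y_k\exists y_{k+1}\cdots\exists y_{k+l}\Bigl(\bigwedge_t \tu{u}_t\subseteq\tu{v}_t\wedge\theta\Bigr),\]
where $\theta$ is a quantifier-free first-order formula. I then introduce propositional variables $X_s$ for every partial assignment $s$ on $\{x_1,\ldots,x_m,y_1,\ldots,y_{k+l}\}$; the variables with $s\in A^m$ stay free, the remaining ones are existentially quantified in $\Gamma^{\varphi,\mathcal A}$. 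The clauses translating the semantics each carry at most one negative literal. For every universal $y_i$ and $s\in A^{m+i-1}$ I include the implications $X_s\to X_{s'}$ for each one-step extension $s'$; for every existential $y_i$ the single clause $X_s\to\bigvee_{s'} X_{s'}$; for every inclusion atom $\tu{u}_t\subseteq\tu{v}_t$ and $s\in A^{m+k+l}$ the clause $X_s\to\bigvee_{s':s(\tu{u}_t)=s'(\tu{v}_t)} X_{s'}$; and for the flat part $\theta$ the unit clause $\neg X_s$ for every $s\in A^{m+k+l}$ with $\mathcal A\not\models_s\theta$. Each of these clauses has at most one negative literal, so their conjunction is DualHorn, and prefixing the appropriate block of existential quantifiers produces the desired $\Sigma_1\textnormal{-DualHorn}$ formula.

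The main obstacle will be establishing (or locating in the literature) the precise normal form for inclusion logic that supports this clean separation into inclusion atoms and a flat first-order body. The structural reason why DualHorn is the natural target class is that both the universal quantifier rule and the inclusion atom express a constraint of the shape ``for every $s$ in the team there exists a companion $s'$ in the team'', which Skolemises over the propositional variables $X_s$ into a disjunction with a single negative premise $\neg X_s$; by contrast, the dependence atom used in Theorem~\ref{dep_completeness} expresses a pairwise incompatibility, producing clauses with two negative literals and forcing the $\cnf^-$ target. Once the normal form is in hand, the polynomial bound on $|\Gamma^{\varphi,\mathcal A}|$ is immediate since $m,k,l$ are constants fixed by $\varphi$ while the number of partial assignments is polynomial in $n$, and the one-to-one correspondence between non-empty teams and satisfying assignments is verified by a routine induction on the quantifier structure exactly as in the proof of Theorem~\ref{dep_completeness}.
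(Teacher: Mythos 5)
Your proposal takes essentially the same route as the paper: the paper's own proof consists of the remark that the argument is analogous to that of Theorem~\ref{dep_completeness}, plus exactly the clause $X_s \rightarrow \bigvee_{s'\colon s(\tu{u}_t)=s'(\tu{v}_t)} X_{s'}$ for inclusion atoms that you identify, so your write-up is a faithful (and somewhat more detailed) rendering of the intended argument, including the correct observation that every generated clause has at most one negative literal. The one loose end you share with the paper is the empty-team versus all-0-assignment mismatch (the all-0-assignment always satisfies $\Gamma^{\varphi,\mathcal A}$ while $f_\varphi$ does not count the empty team), which is easily repaired by adding the purely positive, hence DualHorn, clause $\bigvee_{s\in A^{m}} X_s$.
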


It seems to us that $\#\FO(\subseteq)$ is a strict subclass of $\cP$ and the decision versions of problems in $\#\FO(\subseteq)$ are in $\P$. For this reason, we now investigate  relationship between $\#\FO(\subseteq)$ and the subclass $\totP$. We show that $\#\FO(\subseteq)$ is a subclass of $\totP$ and that $\totP$ contains $\#\Sigma_1\textnormal{DualHorn}$. We conjecture that these classes do not coincide, but this question remains open.

\begin{theorem}\label{incintotp}
	$\#\FO(\subseteq) \subseteq \totP.$
\end{theorem}

From this result we can also conclude that $\#\FO(\dep{\dots}) \not\subseteq \#\FO(\subseteq)$. If that would be the case, then the problem $\#\scnf_*^+$ would be in the class \totP, which can not be true unless $\P = \NP$, since all problems in \totP have an easy (in polynomial time decidable) decision version. Note that the decision version of $\#\scnf_*^+$ is NP-complete.

\begin{theorem}\label{dhintotp}
	$\#\Sigma_1\textnormal{DualHorn} \in \totP.$
\end{theorem}

\section{Complete Problems for $\#\cdot \NP$}

In this section we show that $\#\scnf^-_*$ is $\#\cdot \NP$-complete. To this end, we first observe that $\#\scnf$ is $\#\cdot \NP$-complete. Afterwards we show that the smaller class $\#\scnf^-$ remains $\#\cdot \NP$-complete by adapting the proof for $\cP$-completeness of $\#\bcnf^+$ given by Valiant \cite{DBLP:journals/siamcomp/Valiant79}. We conclude this section with a reduction from $\#\scnf^-$ to $\#\scnf^-_*$ showing the $\#\cdot \NP$-completeness of the latter.

\begin{lemma}\label{lem:ssatComp}
	$\#\ssat$ and $\#\scnf$ are $\#\cdot \NP$-complete under parsimonious reductions.
\end{lemma}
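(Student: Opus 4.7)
The plan is to establish membership via the definition of $\#\cdot\NP$ and hardness via a parsimonious Cook--Levin reduction. Membership of $\#\ssat$ in $\#\cdot\NP$ is already noted in the paper (following Aziz et al.): for an input $\Phi = \exists \tu z\, \varphi(\tu x, \tu z)$ and a candidate assignment $\alpha$ to $\tu x$, one nondeterministically guesses $\tu z$ and verifies $\varphi(\alpha, \tu z)=1$ in polynomial time, so $\{(\Phi,\alpha) : \exists \tu z\, \varphi(\alpha, \tu z) = 1\}\in\NP$ is the required witness relation. Since every $\scnf$ instance is an $\ssat$ instance, $\#\scnf \in \#\cdot\NP$ follows.

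For hardness, I take an arbitrary $f\in \#\cdot\NP$ witnessed by $L\in\NP$ and a polynomial $p$, and construct a parsimonious reduction to $\#\scnf$. A standard padding step first replaces each original witness $y$ of length $\le p(|x|)$ by the unique string $1y0^{p(|x|)-|y|}$ of length $q(|x|) \defeq p(|x|)+1$, producing an NP language $L'$ with $f(x) = |\{y' \in \{0,1\}^{q(|x|)} : (x,y')\in L'\}|$. Applying the uniform Cook--Levin construction to an NP machine $M$ deciding $L'$ then yields, in time polynomial in $n=|x|$, a CNF $\Phi_n(\tu x, \tu y, \tu z)$ such that for every bit pattern $(a,b)$, the formula $\Phi_n(a,b,\tu z)$ is satisfiable iff $M$ accepts $(a,b)$. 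Hard-wiring the bits of $x$ into $\Phi_n$ gives a CNF $\varphi_x(\tu y, \tu z)$, and
\[\Psi_x \defeq \exists \tu z\, \varphi_x(\tu y, \tu z) \in \scnf\]
has projected model count $|\{b\in\{0,1\}^{q(n)} : \exists c,\, \varphi_x(b,c)=1\}| = f(x)$. The map $x\mapsto \Psi_x$ is therefore a parsimonious reduction from $f$ to $\#\scnf$, and since every $\scnf$ formula is trivially an $\ssat$ formula, also to $\#\ssat$.

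The only real subtlety is the variable length of witnesses, handled by the padding; everything else is a routine unwinding of Cook--Levin. Worth emphasising: I do \emph{not} need the classical bijective (parsimonious) form of Cook--Levin that matches accepting computations one-to-one with satisfying assignments of $\tu z$. Because $\Sigma_1$-quantification projects $\tu z$ out, the weaker ``satisfiable iff accepting'' property is enough to make the overall reduction parsimonious on the counted quantity $\tu y$, which is what the statement demands.
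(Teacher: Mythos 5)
Your approach is the same as the paper's: the paper proves membership of $\#\ssat$ by citing Aziz et al.\ and disposes of hardness in one sentence (``a simple adaptation of Cook's proof''), and your write-up is exactly the natural expansion of that sentence, including the correct observation that the $\Sigma_1$-quantification over $\tu z$ means you only need the ``satisfiable iff accepting'' direction of Cook--Levin rather than its bijective form. The one concrete slip is in the padding step: the map $y \mapsto 1y0^{p(|x|)-|y|}$ is \emph{not} injective (e.g.\ $y=01$ and $y=010$ both map to $101 0^{p-2}$), so distinct witnesses of different lengths can collide and the count $|\{y' \in \{0,1\}^{q(|x|)} : (x,y')\in L'\}|$ can undercount $f(x)$. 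Use the standard suffix-marker encoding $y \mapsto y10^{p(|x|)-|y|}$ (or a prefix padding $0^{p(|x|)-|y|}1y$), from which $y$ is uniquely recoverable by stripping the trailing $0^{*}1$; with that correction the reduction is parsimonious as claimed and the rest of the argument goes through.
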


\begin{theorem}\label{complete}
	$\#\scnf^-$ is $\#\cdot \NP$-complete under Turing reductions.
\end{theorem}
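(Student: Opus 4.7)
My plan has two parts. For membership, $\#\scnf^-$ is a syntactic restriction of $\#\ssat$, so by Lemma~\ref{lem:ssatComp} it lies in $\#\cdot\NP$. For $\#\cdot\NP$-hardness under Turing reductions, I plan to chain three reductions starting from the $\#\cdot\NP$-complete problem $\#\scnf$ (Lemma~\ref{lem:ssatComp}):
$$\#\scnf \leq_{\mathrm{par}} \#(\tcnf,\stcnf^-) \leq_T \#(\bcnf^-,\stcnf^-) \leq_{\mathrm{par}} \#\scnf^-.$$

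The two outer reductions are parsimonious and essentially structural. For the first, I would convert the CNF body of an $\scnf$ input into 3-CNF via Tseitin (adding fresh existentially quantified variables, which preserves the count over the free variables), then eliminate every positive free occurrence of $x_i$ by substituting $\neg x_i'$ for a fresh free variable $x_i'$ and enforcing $x_i' = \neg x_i$ via the clauses $x_i \vee x_i'$ and $\neg x_i \vee \neg x_i'$. The purely positive clauses $x_i \vee x_i'$ are placed into the (unrestricted) $\tcnf$ component, while the rewritten 3-CNF body together with the clauses $\neg x_i \vee \neg x_i'$ under the existential prefix form the $\stcnf^-$ component; any satisfying pair $(\tu b,\tu{b'})$ then automatically obeys $\tu{b'} = \neg \tu b$, yielding a bijection with the satisfying assignments of the input. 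For the last reduction, I would simply merge the $\bcnf^-$ part as a conjunct inside the existential prefix of the $\stcnf^-$ part; since both components already contain only negative free literals, the merged formula is in $\scnf^-$ with the same satisfying assignments.

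The main obstacle is the middle Turing reduction $\#(\tcnf,\stcnf^-) \leq_T \#(\bcnf^-,\stcnf^-)$, where I have to adapt Valiant's proof of $\cP$-completeness of $\#\bcnf^+$ from~\cite{DBLP:journals/siamcomp/Valiant79}. Two complications arise here: the 3-CNF component has to be simplified to 2-CNF, yet a naive Tseitin encoding would introduce new existential variables that would violate the $\bcnf^-$ format; and any positive free literals arising from the encoding must be eliminated while keeping the monotone-negative form. My plan is to follow Valiant's gadget-and-interpolation approach: introduce monotone gadgets using complementary free variables together with polynomially many padding variables, where all newly created existentials (the padding as well as the Tseitin auxiliaries) are absorbed into the $\stcnf^-$ component, which is free to accommodate them. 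This produces a polynomial family of $\#(\bcnf^-,\stcnf^-)$ instances indexed by a parameter $K$, whose oracle answers are polynomials in $K$ from which the desired count is recovered by polynomial interpolation over sufficiently many values of $K$. Chaining the three reductions then yields $\#\scnf \leq_T \#\scnf^-$ and thus $\#\cdot\NP$-hardness under Turing reductions.
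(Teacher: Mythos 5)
Your membership argument and the two outer reductions are fine; in fact your first step ($\#\scnf \leq_{\mathrm{par}} \#(\tcnf,\stcnf^-)$ via complementary free variables $x_i'$, with the positive clauses $x_i \vee x_i'$ parked in the unrestricted $\tcnf$ component) is a valid and slightly different route from the paper, which instead reduces from $\#\stcnf$ and uses indicator variables $e_i$ with biimplications $\neg e_i \leftrightarrow E_i\restriction_{\{x_1,\dots,x_k\}}$ to split the mixed clauses. The final merging step $\#(\bcnf^-,\stcnf^-)\leq_{\mathrm{par}}\#\scnf^-$ is exactly the paper's.

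The gap is the middle step, which is where essentially all of the technical content of the theorem lives, and your plan for it both omits the key construction and contains a concrete error. First, there is no known single ``gadget'' reduction from (3-)CNF counting to monotone 2-CNF counting: Valiant's argument is a chain through $\cycc$, $\perfmat$ and $\impmat$, and the actual work in adapting it here is showing how the auxiliary $\stcnf^-$ formula can be \emph{transported through each of these steps} while its free variables keep occurring only negatively --- e.g.\ the junction condition becomes the negative clauses $\neg j_1 \vee \neg j_2$, edge renamings are substituted into $\psi$, and when an edge $e_i$ is duplicated into copies $e_{i1},\dots,e_{ik}$ one must replace $\neg e_i$ by $\bigwedge_j \neg e_{ij}$ and re-expand clauses such as $(\neg e_{i_1}\vee\neg e_{i_2}\vee\ell)$ into the $k^2$ clauses $(\neg e_{i_1,j_1}\vee\neg e_{i_2,j_2}\vee\ell)$ to stay in $\stcnf^-$. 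None of this appears in your sketch. Second, your statement that the padding variables are ``absorbed into the $\stcnf^-$ component'' as existentials would break the interpolation: in the perfect-to-imperfect matching step the padding edges of $G_k$ must be \emph{free, counted} variables of the new $(\bcnf^-,\stcnf^-)$ instance, because it is precisely the fact that each original matching of deficiency $r$ gives rise to $(k+1)^r$ matchings of $G_k$ that makes the oracle answers equal to $\sum_r A_r'(k+1)^r$ and hence recoverable by solving a linear system. If the padding variables were projected out under the existential prefix, each original solution would contribute at most one to the count, the oracle answers would not depend polynomially on $k$ in the required way, and the interpolation would yield nothing. So the middle Turing reduction, as proposed, does not go through without the explicit multi-stage construction the paper supplies.
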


\begin{proofsketch}
  Membership follows from \ref{lem:ssatComp}, since $\#\scnf^-$ is a special case of $\#\scnf$.
	For the hardness proof, we show a chain of reductions adapted from the one used by Valiant \cite{DBLP:journals/siamcomp/Valiant79} to show the \cP-completeness of $\#\bcnf^+$. 	Recall that the main steps of Valiant's chain of reductions are:
	\begin{alignat*}{4}
	\#\tcnf &\leq \perm &&\leq \#\perfmat\\
	&\leq \#\impmat	&&\leq \#\bcnf^+.
	\end{alignat*} 
	Our idea is to add a $\stcnf$-formula to the input of each problem in the above chain of reductions, and to express certain properties of the respective inputs in the added formula. We then count only the solutions to the input that also satisfy the added formula.
  
  As part of the chain of reductions we will make use of the  following problem:
  \problemdef{$\#(\tcnf, \stcnf^-)$}{Formula $φ(x_1, \dots, x_k) \in \tcnf$ and  formula $ψ(x_1, \dots, x_k) \in \stcnf^-$}{Number of satisfying assignments of $φ ∧ ψ$}
We will reduce $\#\stcnf$ to $\#(\tcnf, \stcnf^-)$, and then apply the above chain of reductions (with added formulae, as described above). This results in a reduction to $\#(\bcnf^-, \stcnf^-)$, defined analogously to the above problem. Finally it  is straightforward to show $\#(\bcnf^-, \stcnf^-) \leq \#\scnf^-$ using the fact that for $φ \in \bcnf^-$ and $ψ \in \stcnf^-$, the prenex normal form of $φ ∧ ψ$ is a $\stcnf^-$-formula. We conclude by sketching the first reduction.
 
  \underline{$\#\stcnf \leq \#(\tcnf, \stcnf^-)$}: We construct for any $φ \in \stcnf$ two formulae $φ' \in \tcnf$ and $ψ \in \stcnf^-$ such that $\#\stcnf(φ) = \#(\tcnf, \stcnf^-)(φ', ψ)$.

Assume $φ = ∃y_1 \dots ∃y_\ell \bigwedge{C_i} ∧ \bigwedge{D_i} ∧ \bigwedge{E_i}$, where clauses $C_i$ only contain free variables of φ, clauses $D_i$ contain only bound variables of φ, and clauses $E_i$ contain at least one free and at least one bound variable of φ. We can now simply add all clauses $C_i$ to $φ'$ and all clauses $D_i$ to ψ.
  
  To handle the remaining clauses, we add for each $E_i$ a new free variable $e_i$. We then express in $φ'$ that $e_i$ is true \ifff clause $E_i$ is not satisfied by the assignment to the free variables, and express in ψ that $E_i$ has to be satisfied by the assignment to the bound variables if $e_i$ is true. The former does not involve any bound variables and for the latter, the only needed free variable is $e_i$, which only occurs negatively.
  
\end{proofsketch}

Because of the special role of the empty team in the team logics we consider, we will also show the completeness for another version of $\#\scnf^-$, denoted as $\#\scnf_*^-$,  for which the all-0-assignment is not counted.

\begin{theorem}\label{thm:scnfstarNumNP}
	The problem $\#\scnf_*^-$ is $\#\cdot \NP$-complete under Turing reductions.
\end{theorem}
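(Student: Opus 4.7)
My plan is to prove the two inclusions separately, reusing Theorem~\ref{complete} (which establishes that $\#\scnf^-$ is $\#•\NP$-complete under Turing reductions) for the hardness direction, and noting that $\#\scnf^-_*$ differs from $\#\scnf^-$ by at most one (namely by whether the all-0-assignment is counted).

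For membership in $\#•\NP$, I would give a direct \emph{guess-and-check} argument in the style of Lemma~\ref{lem:ssatComp}: the witnesses for input $\varphi(x_1,\ldots,x_k) = \exists y_1 \ldots \exists y_l\, \psi \in \scnf^-$ are pairs $(\alpha,\beta)$ with $\alpha \in \{0,1\}^k \setminus\{\vec 0\}$ and $\beta \in \{0,1\}^l$ such that $\psi(\alpha,\beta)$ is satisfied, and the witness language is clearly in $\P$ (hence in $\NP$). The exclusion of the all-0-assignment is handled by the single syntactic check $\alpha \neq \vec 0$, so $\#\scnf^-_* \in \#•\NP$.

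For hardness, I would reduce $\#\scnf^-$ to $\#\scnf^-_*$ with just one oracle call. Writing $\chi(\varphi) \in \{0,1\}$ for the indicator that the all-0-assignment satisfies $\varphi$, we have the basic identity
\[
\#\scnf^-(\varphi) \;=\; \#\scnf^-_*(\varphi) + \chi(\varphi).
\]
The key trick is to determine $N := \#\scnf^-(\varphi)$ and $\chi(\varphi)$ simultaneously from one query. Given $\varphi(x_1,\ldots,x_k)$, I form $\varphi'(x_1,\ldots,x_k,z) \in \scnf^-$ by adding a fresh free variable $z$ that does not appear anywhere in $\varphi$; this is syntactically valid since the condition ``free variables occur only negatively'' is vacuously satisfied for $z$. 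Each satisfying assignment of $\varphi$ yields exactly two satisfying assignments of $\varphi'$ (one for each value of $z$), and the all-0-assignment $(\vec 0, 0)$ satisfies $\varphi'$ if and only if $\vec 0$ satisfies $\varphi$. Hence
\[
q \;:=\; \#\scnf^-_*(\varphi') \;=\; 2N - \chi(\varphi).
\]

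Since $\chi(\varphi) \in \{0,1\}$, the parity of $q$ immediately yields $\chi(\varphi) = q \bmod 2$, and then $N = (q + \chi(\varphi))/2$. So a single oracle query to $\#\scnf^-_*$, followed by trivial arithmetic, recovers $\#\scnf^-(\varphi)$. Combined with Theorem~\ref{complete}, this establishes $\#•\NP$-hardness of $\#\scnf^-_*$ under Turing reductions.

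The only delicate point is justifying that padding with an unused free variable $z$ keeps the formula in $\scnf^-$ and that the resulting count genuinely doubles; both are immediate but should be mentioned explicitly to close the gap. Apart from that, the argument is essentially a parity observation and no further combinatorial machinery is required.
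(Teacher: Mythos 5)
Your hardness argument is correct and takes a genuinely different route from the paper's. The paper exploits the downward monotonicity of $\scnf^-$-formulae in their free variables (since these occur only negatively, $\varphi$ is satisfiable iff the all-0-assignment satisfies it): it first decides satisfiability via an auxiliary query on $\varphi'\wedge(\neg x_{n+1}\vee\neg x_{n+2})$, where $\varphi'$ is $\varphi$ with its free variables set to $\bot$, and then, if satisfiable, adds $1$ to a second oracle answer. Your padding-with-a-dummy-free-variable trick instead reads off $\chi(\varphi)$ from the parity of a single oracle answer $q=2N-\chi(\varphi)$ and recovers $N=(q+(q\bmod 2))/2$; this needs no monotonicity, uses only one non-adaptive query, and therefore actually yields a \emph{metric} reduction in the paper's terminology, a slightly stronger statement than Turing reducibility. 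The one point you rightly flag --- that a fresh, non-occurring free variable $z$ vacuously satisfies the ``free variables occur only negatively'' condition --- is indeed the only syntactic issue, and it is unproblematic since the encoding declares free variables via the predicate $P$ independently of their occurrence in clauses.

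Your membership argument, however, is flawed as written. You propose to count pairs $(\alpha,\beta)$ with $\alpha\neq\vec 0$ and $\psi(\alpha,\beta)$ true, with the witness language in $\P$. This counts satisfying assignments of the quantifier-free matrix, not of the $\Sigma_1$-formula: an $\alpha$ admitting several witnesses $\beta$ is counted several times, so the function computed is $\sum_{\alpha\neq\vec 0}\lvert\{\beta:\psi(\alpha,\beta)\}\rvert$ rather than the required $\lvert\{\alpha\neq\vec 0:\exists\beta\,\psi(\alpha,\beta)\}\rvert$ (already $\exists y\,(y\vee\neg y)$ with one unused free variable separates the two). The fix is what the paper (and Lemma~\ref{lem:ssatComp}, following Aziz et al.) does: take the witnesses to be the assignments $\alpha$ alone and move the existential quantification over $\beta$ into the verification language $L=\{(\varphi,\alpha):\alpha\neq\vec 0\text{ and }\exists\beta\,\psi(\alpha,\beta)\}$, which is in $\NP$ by guessing $\beta$. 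Since the definition of $\#\cdot\NP$ allows $L\in\NP$, membership goes through; the error is only in where the quantifier over $\beta$ is placed.
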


\section{Conclusion}

 In this paper we have studied the following hierarchy of classes defined by counting problems for team-based logics:
\[\begin{array}{cccc}
	\totP & \subseteq \cTeam{\mathcal{L}} = \cP  & \subseteq & \cTeam{\FO(\bot)} = \#\cdot\NP  \\
	\rotatebox{90}{$\subseteq$} &   &   & \rotatebox{90}{$\subseteq$} \\
	\cTeam{\FO(\subseteq)}       &  &    & \cTeam{\FO(\dep{\dots})}
\end{array}\]
We also showed that our classes are closed under first-order reductions and that $\cTeam{\FO(\subseteq)}$ and $\cTeam{\FO(\dep{\dots})}$ contain complete problems from \cP and $\#\cdot\NP$, respectively. The latter problem is even complete for $\cTeam{\FO(\dep{\dots})}$ under first-order reductions.

The connection between $\cTeam{\FO(\dep{\dots})}$ and the classes \cP and $\#\cdot \NP$ is not yet clear. While we know that a complete problem from $\#\cdot \NP$ is contained in it, it is open whether the class coincides with $\#\cdot \NP$, and (if not) whether it contains the class \cP. We conjecture that the answer to both questions is negative, since the defining logic has closure properties that make it unlikely to contain counting versions of non-monotone problems from \cP. 

Regarding $\cTeam{\FO(\subseteq)}$, the search for a complete problem could be interesting. We have only showed that the problem $\#\textnormal{DualHorn}$ is contained in this class and the problem $\#\Sigma_1\textnormal{DualHorn}$ is hard for this class, but neither of the problems is known to be complete.

The lower end of our hierarchy deserves further study as well. The class $\cTeam\FO$ (\ie, the class with no  dependency atoms  in the formulae) can be shown to be a subclass of $\FTC$, the class of functions computable by families of polynomial size constant depth majority circuits (see \cite{DBLP:books/daglib/0097931}). The circuit-based counting class \cAC, counting proof trees in polynomial size constant depth unbounded fan-in circuits \cite{DBLP:books/daglib/0097931}, was characterized in a model-theoretic manner by counting assignments to free function symbols in certain quantifier-restricted FO-formulae \cite{DBLP:conf/csl/0001HKV16}. A similar quantifier restriction for $\cTeam{\FO(A)}$, where $A$ consists of the dependency atom plus a totality atom (that we did not study  in the present paper), also leads to a characterization of \cAC. This suggests that 
low level counting classes and circuit classes in the context of counting problems for team-based logics might be worth studying. 
Another natural question is to search for generalized dependency atoms that lead to interesting relations to complexity classes. Besides the aforementioned totality atom, the constancy or the exclusion \cite{DBLP:journals/apal/Galliani12} atom are worth examining. In particular, it is an open question to find an atom $A$ such that $\cTeam{\FO(A)}=\cP$. The logic $\mathcal{L}$ of \cite{JuhaFan}, though it satisfies the equality, is not of this form.

In the context of counting complexity theory, an interesting question to study is the approximability of problems in different classes. In our case, it is unlikely that any of our classes is efficiently approximable (in the sense of FPRAS): In \cite{DBLP:journals/jcss/DyerGJ10} it was shown that it is unlikely that the number of satisfying assignments of CSPs can be approximated by an FPRAS, unless all relations in the constraint language are affine. Since our classes contain counting problems for classes of formulae which do not admit this property, this result  applies. Consequently, it would be interesting to study restrictions of our full classes to obtain, possibly, efficiently approximable fragments.

Our proof of the completeness of $\#\scnf^-$ for $\#\cdot \NP$ introduces problems that arise from ``pairing decision problems'' and gives simultaneous reductions between such pairs. This idea might be helpful in other contexts as well; in particular it should lead to more interesting complete problems for $\#\cdot \NP$ or higher levels $\#\cdot\Sigmai{k}$ of the counting polynomial-time hierarchy.

\bibliography{ref}

\appendix

\section{Appendix}

\begin{proof}[Proof of Theorem \ref{IinNP}]
	Let $φ(x_1, \dots, x_k) \in \FO(A)$.  For a given input structure $\mathcal{A}$, to count the number of (non-empty) teams $X$ with $\mathcal{A} \models_X φ$ with a $\#\cdot \NP$-algorithm, we first non-deterministically guess a team $X$ and check in $\NP$ whether $\mathcal{A} \models_X φ$ holds.
	
	For the latter note that for any fixed formula φ without disjunctions and existential quantifiers, $\mathcal{A} \models_X φ$ can be checked in nondeterministic polynomial time (where nondeterminism is only needed to handle generalized atoms). Disjunctions can be handled by non-deterministically guessing the subteams for the disjuncts, and existential quantifiers can be handled by non-deterministically guessing the supplementing functions. Hence, $\mathcal{A} \models_X φ$ can be checked in \NP.
\end{proof}

\begin{proof}[Proof of Theorem \ref{NPinI}]
	First note that 
	$\#\Sigma_1^1 = \cTeam{\FO(⊥)}$,
	because by Theorem~\hyperref[sig=ind]{\ref*{expr} item \ref*{sig=ind}}, any sentence $\varphi(R)\in \Sigma_1^1$ with  a $k$-ary relation symbol $R$ can  be easily turned into a sentence $\varphi'(R')$ for some $(k+1)$-ary $R'$ such that $\varphi$ and $\varphi'$ define the same functions and $\varphi'(R')$ is only satisfied by non-empty relations. It then suffices to show that $\#\cdot \NP \subseteq \#\Sigma_1^1$.
	
	Consider the string vocabulary $\tString = (S^1)$ used to encode binary strings as first-order structures. For any binary string $w = w_0 w_1 \dots w_{n-1} \in \{0,1\}^*$, we define the structure encoding $w$ as $\mathcal{A}_w = (\{0, 1, \dots, n-1\}, \textbf{S})$, where $\textbf{S}(i) = w_i$ for all $i$. Note that with this definition the standard binary encoding of the structure $\mathcal{A}_w$ is the string $w$ itself, namely $\enc_{\tString}(\mathcal{A}_w) = w$ for all $w \in \{0,1\}^*$.
	
	Let $f \in \#\cdot \NP$ via $L \in \NP $ and the polynomial $p$, with $p(n)= n^\ell+c$ for some $\ell, c \in \mathbb{N}$. By definition we have $f(x)=|\{y:|y| = p(|x|), (x,y)\in L)\}|$. We encode tuples $(x,y)$ with $x \in \{0,1\}^*$ and $y \in \{0,1\}^{|x|^k}$ by structures over the vocabulary $τ_k = \tString \cup (R^k)$. Note that such an encoding is possible because we can define the extension of the numerical predicates to tuples in \FO (see \cite{DBLP:books/daglib/0095988}). For strings $x$ with $|x| \geq 2$ we can choose $k$ such that $|x|^k \geq p(|x|)$ (strings of length 1 can be handled separately). Fix such a $k$. We denote by $\mathcal{A}_{(x,y)}$ the $τ_k$-structure encoding the tuple $(x,y)$. 
	
	Now consider the language
	\[L' \defeq \left\{\enc_{τ_k}(\mathcal{A}_{(x,y)}) \middle|
	\begin{array}{c}
	\mathcal{A}_{(x,y)} \in \struc[τ_k],
	y = y_0 \dots y_{|x|^k -1},\\
	y_{p(|x|)} = \dots = y_{|x|^k -1} = 0
	\textrm{ and } (x,y_0 \dots y_{p(|x|)-1}) \in L
	\end{array}\right\}.\]
	For any given $x$, $\enc_{τ_k}(\mathcal{A}_{(x,y)})$ is an element of $L'$ if and only if the first $p(|x|)$ bits of $y$ form an input $z$ such that $(x,z) \in L$ and the rest of the bits are fixed to be 0. Thus,
	\[f(x) = |\{y : \enc_{τ_k}(\mathcal{A}_{(x,y)}) \in L'\}|.\]
	Obviously $L' \in \NP$, which, by Fagin's Theorem (see \cite{fagingeneralized}), implies that there is a sentence $φ \in \Sigma_1^1$ over $τ_k$ such that
	\[\enc_{τ_k}(\mathcal{A}_{(x,y)}) \in L' ⇔ \mathcal{A}_{(x,y)} \models φ.\]
	Viewing φ as a formula over the vocabulary \tString with free relational variable $R$ of arity $k$ we have
	\[\mathcal{A}_x \models φ(R) ⇔ \enc_{τ_k}(\mathcal{A}_x, R) \in L'\]
	for all $x \in \{0,1\}^*$, which yields
	\[f(x) = |\{R : \mathcal{A}_x \models φ(R)\}|.\]
	Hence $f \in \#\Sigma_1^1$.
\end{proof}

For the following proofs we need an encoding of of $\scnf^-$-formulae as first-order structures. For this, we use the vocabulary
\[τ_{{\scnf}^-} = (F^1, B^1, P^2, N^2),\]
where the predicates $F$ and $B$ state, respectively, which variables occur free and which occur bound in the encoded formula, and the predicate $P$ (resp. $N$) is the incidence relation between clauses and positive (resp. negated) variables. More precisely, an arbitrary $\scnf^-$-formula $\varphi(x_1,\dots, x_k)=\exists y_1 \dots \exists y_l \psi(x_1, \dots, x_k, y_1, \dots, y_l)$ with $\psi=\bigwedge_{i=1}^{n}C_i$, $C_i=l_{i,1}\vee \dots \vee l_{i,m_i}$ and
\[l_{i,j}\in \{\neg x_i| i \in \{1, \dots, k\}\} \cup \{y_i,\neg y_i| i \in \{1, \dots, l\}\}\]
is encoded as the $τ_{\scnf^-}$-structure
\[\mathcal{A} = (\textrm{dom}(\mathcal{A}), F^\mathcal{A}, B^\mathcal{A}, P^\mathcal{A}, N^\mathcal{A})\]
defined as follows: The elements of $\textnormal{dom}(\mathcal{A})$ are numerical encodings of the variables and clauses in φ. With some abuse of notations, we write $\textrm{dom}(\mathcal{A}) = \{x_1, \dots, x_k, y_1, \dots, y_l, C_1, \dots, C_n\}$, where we identify the variables and clauses with their encodings. The interpretations of the predicate symbols are defined as:
\begin{itemize}
	\item $F^\mathcal{A}(x)$ \ifff $x$ is a free variable in φ
	\item $B^\mathcal{A}(x)$ \ifff $x$ is a bound variable in φ
	\item $P^\mathcal{A}(C_i,x)$ \ifff there is a $j$ such that $l_{i,j}=x$
	\item $N^\mathcal{A}(C_i,x)$ \ifff there is a $j$ such that $l_{i,j}=\neg x$
\end{itemize}

\begin{proof}[Proof of Theorem \ref{def-fodep}]
	By Theorem~\hyperref[dep=esor-]{\ref*{expr} item \ref*{dep=esor-}}, it suffices to construct a $\Sigma_1^1$-formula $ψ(T)$ with $T$ occurring only negatively such that for each $τ_{\scnf}$-structure $\mathcal{A}$, the number of relations $\textbf{T}$ with $\mathcal{A} \models ψ(\textbf{T})$ is equal to the number of satisfying assignments of the $\scnf^-$-formula encoded by $\mathcal{A}$. Note a subtle point in this setting: The all-0-assignment is not counted by $\#\scnf_*^-$. In the formula $ψ(T)$, the all-0-assignment corresponds to the empty relation, which, in turn, corresponds to the empty team in the translation to $\FO(\dep{\dots})$ given in Theorem~\hyperref[dep=esor-]{\ref*{expr} item \ref*{dep=esor-}}. We did not count the empty team when we defined the functions $f_ψ$ for $\#\FO(\dep{\dots})$, and the all-0-assignment is thus not counted in this setting.
	
	Let us now define the formula $ψ(T)$. First, let $\psi_1$ be an \FO-formula expressing that $\mathcal{A}$ is a correct encoding of a $\scnf^-$-formula (i.e., $F$ and $B$ correspond to disjoint sets, free variables occur only negatively in the clauses, etc.). We omit here the precise definition of $\psi_1$. Let $\psi_2$ be a formula expressing that $T$ only assigns values to free variables and that each clause must be satisfied by the assignment, that is, formally,
	\begin{align*}
	\psi_2&(T)= \exists S \forall C
	\bigg(\Big(\neg F(C) \wedge \neg B(C)\Big)\rightarrow 
	\\
	& \exists x \Big(N(C,x)\wedge \big((B(x)\wedge \neg S(x))
	\vee (F(x)\wedge \neg T(x))\big)\Big) 
	 \vee\exists x \Big(P(C,x)\wedge (B(x)\wedge S(x)\Big)\bigg), 
	\end{align*}
	Finally, let
	\[ψ(T) = ψ_1 ∧ ψ_2(T)\]
	It is easy to see that the formula $\psi(T)$ has the desired properties.
\end{proof}

\begin{proof}[Proof of Theorem \ref{fo-closure}]
	
		Let $f_1,f_2$ be two functions with $f_1\le_{\text{fo}} f_2$ and let $\sigma_1,\sigma_2,I$ be defined as in definition \ref{fored}. If $f_2\in\cTeam{\FO(A)}$, then there is a formula $\varphi(x_1, \dots, x_n) \in \FO(A)$ such that $f_2(\enc_{\sigma_2}(\calA))$ is equal to the number of teams $X$ satisfying $\calA \models_X \varphi(x_1,\dots,x_n)$. 
		
		We define a formula $\psi(x_1^1,\dots, x_n^k)\in\FO(A)$ over $\sigma_1$ such that for all $\calA \in \struc[\sigma_1]$ and all teams $X$ over $x_1^1,\dots, x_n^k$, 
		\[\calA \models_{X}\psi(x_1^1,\dots, x_n^k) \Leftrightarrow I(\calA) \models_{I(X)} \varphi(x_1,\dots,x_n) \]
	
    We first define  inductively a formula $φ'$ from $\varphi$ as follows:
		
		\begin{itemize}
			\item $φ'=\varphi_R(x_1^1,\dots,x_1^k, \dots, x_{a_l}^1, \dots, x_{a_l}^k)$ if $\varphi=R(x_1,\dots, x_{a_l})$
			\item $φ'=\widetilde{\varphi}_R(x_1^1,\dots,x_1^k, \dots, x_{a_l}^1, \dots, x_{a_l}^k)$ if $\varphi=\neg R(x_1,\dots, x_{a_l})$
			\item $φ'=\bigwedge_i x^i=y^i$ if $\varphi= (x=y)$
			\item $φ'=(x^1, \dots, x^k)\le (y^1, \dots, y^k)$ if $\varphi= x\le y$
			\item $φ'=+((x^1, \dots, x^k),(y^1, \dots, y^k),(z^1, \dots, z^k))$ if $\varphi= +(x,y,z)$
			\item $φ'=\times((x^1, \dots, x^k),(y^1, \dots, y^k),(z^1, \dots, z^k))$ if $\varphi= \times(x,y,z)$
			\item $φ'=\bigwedge_i =(x_1^1, \dots, x_1^k, \dots, x_m^1, \dots, x_m^k,y^i)$ if $\varphi=\ \dep{x_1, \dots, x_m, y}$
			\item $φ' = (x_1^1, \dots x_1^k, \dots, x_m^1, \dots, x_m^k) \subseteq (y_1^1, \dots, y_1^k, \dots, y_m^1, \dots, y_m^k)$\\
			if $φ = (x_1, \dots, x_m) \subseteq (y_1, \dots, y_m)$
			\item $φ' = (y_1^1, \dots, y_1^k, \dots, y_{m_1}^1, \dots, y_{m_1}^k) \bot_{(x_1^1, \dots, x_1^k, \dots, x_{m_2}^1, \dots, x_{m_2}^k)} (z_1^1, \dots, z_1^k, \dots, z_{m_1}^1, \dots, z_{m_1}^k)$\\
			if $φ = (y_1, \dots, y_{m_1}) \bot_{(x_1, \dots, x_{m_2})} (z_1, \dots, z_{m_1})$
			\item $φ'=φ_1' \wedge φ_2'$ if $φ = φ_1 \land φ_2$
			\item $φ'=φ_1' \vee φ_2'$ if $\varphi=\varphi_1 \vee \varphi_2$
			\item $φ'=\exists x^1 \exists x^2 \dots \exists x^k \varphi_0(x^1, \dots, x^k) \wedge φ_1'$ if $\varphi=\exists x \varphi_1 $
			\item $φ'=\forall x^1 \forall x^2 \dots \forall x^k \widetilde{\varphi}_0(x^1, \dots, x^k) \vee φ_1'$ if $\varphi=\forall x \varphi_1 $
		\end{itemize}
		
\noindent where for any $\FO$-formula $ξ$, $\widetilde{ξ}$ is the formula $\neg ξ(x_1^1,\dots,x_1^k, \dots, x_{a_l}^1, \dots, x_{a_l}^k)$ in negation normal form. The numerical predicates $\le,+,\times$ are definable for tuples by Immerman \cite{DBLP:books/daglib/0095988}.
		Now, define  $\psi(x_1^1,\dots, x^k_n)=φ'(x_1^1,\dots, x_n^k) \wedge \bigwedge_i \varphi_0(x_i^1,\dots,x_i^k)$. 	
\end{proof}

\begin{proof}[Proof of Theorem \ref{dep_completeness}] By Theorem \ref{def-fodep}, $\#\scnf_*^-$ is contained in $\cTeam{\FO(\dep{\dots})}$. It remains to show hardness. Let $\varphi(x_1,\ldots ,x_m)\in \FO(\dep{\dots})$, and let $\mathcal A$ be a structure with domain $A=\{0,\ldots,n-1\}$. We reduce computing the value of $f_\varphi(\mathcal{A})$ to counting the number of satisfying assignments (apart from the all-0-assignment) of a suitable Boolean formula $\Gamma^{φ,\mathcal{A}} \in \scnf^-$. By \cite{DBLP:books/daglib/0030191}, we may assume without loss of generality that $\varphi$ is of the form
	\[ \forall y_1\ldots \forall y_k \exists y_{k+1}\ldots \exists y_{k+l} ( \bigwedge_t \dep{\tu{u}_t,w_t} \wedge \theta ),   \]
	where $\theta$ is a quantifier-free \FO-formula, $w_t\in \{y_{k+1},\ldots , y_{k+l} \}$, and $\tu{u}_i$ is a tuple consisting of some of the variables $y_1,\ldots, y_k$. $\Gamma^{φ,\mathcal{A}}$ is a formula over the set $\{X_s\mid s\in A^{m}\cup A^{m+1}\cup\dots\cup A^{m+k+l} \}$ of propositional variables. Observe that each such $s$ can be identified with a partial first-order assignment over the domain $\{x_1,\ldots,x_m,y_1,\ldots,y_{k+l}\}$. Clearly, the number of such assignments and consequently also the number of variables $X_s$ is polynomial.
	The variables $X_s$ for $s \in A^{m+1}\cup \cdots \cup A^{m+k+l}$ will be existentially quantified in $\Gamma^{φ,\mathcal{A}}$, whereas the variables $X_s$ for $s \in A^{m}$ will remain free and occur only negatively. 
	
	We now define the set $\mathfrak{C}$ of clauses of $\Gamma^{φ,\mathcal{A}}$.
	For every universally quantified variable $y_i$ in $\varphi$, and for every $s\in A^{m+(i-1)}$ we introduce to $\mathfrak{C}$ the following set of clauses:
	\[ \{ X_s\rightarrow X_{s'} \mid s'\in A^{m+i} \textrm{ and } s=s'\upharpoonright \{(\tu{x},y_1,\ldots,y_{i-1})\}\}\tag{$\forall$}\label{forall} \]
	For every existentially quantified variable $y_i$ in $\varphi$, and for every $s\in A^{m+(i-1)}$ we introduce the following clause:
	\[ X_s\rightarrow \hspace{30pt}\mathclap{\bigvee_{\substack{s'\in A^{m+i} \textrm{ and }\\s=s'\upharpoonright \{(\tu{x},y_1,\ldots,y_{i-1})\}}}} \hspace{30pt} X_{s'} . \tag{$\exists$}\label{exists}      \]
	The quantifier-free part of the formula $\varphi$ also gives rise to clauses as follows. For each dependence atom $\dep{\tu{u}_t,w_t}$ we introduce the set of clauses:
	\begin{align*}
	\{ \neg X_s\vee \neg X_{s'} \mid s,s' \in & A^{m+k+l}, 
	 s(\tu{u}_t)=s'(\tu{u}_t)  \textrm{ and }   s(w_t)\neq s'(w_t)\}.
	\tag{$\dep{\dots}$}\label{dep}
	\end{align*}
	Finally, for the \FO-formula $\theta$ the team semantics satisfaction condition stipulates that all assignments  $s \in  A^{m+k+l}$ should satisfy $\theta$ (since $\theta$ is flat). This can be expressed by introducing the following two sets of clauses:
	\[  \{ X_s\rightarrow \top \mid s\in A^{m+k+l}  \textrm{ and } \mathcal{A}\models_s \theta \}\tag{$\top$}\label{top} \]
	and 
	\[  \{ X_s\rightarrow \bot \mid s\in A^{m+k+l}  \textrm{ and } \mathcal{A}\not \models_s \theta \}. \tag{$\bot$}\label{bot}\]
	
	Now define $\Gamma^{φ,\mathcal{A}} \in \scnf^-$ with respect to the input $\varphi$ and $\mathcal{A}$ as
	\[\Gamma^{\varphi, \mathcal{A}} := \exists \{X_s\mid s\in \bigcup_{1\le i\le k+l} A^{m+i}\}\bigwedge \mathfrak{C}.  \]
	
	Clearly there is a 1-1-correspondence between teams $X$ over domain $\{x_1, \dots, x_m\}$ and assignments $S$ of formula $\Gamma^{\varphi, \mathcal{A}}$. Furthermore it is also easy to check that for all teams $X$ with domain $\{x_1,\ldots ,x_m\}$
	\[\mathcal A\models_X\varphi(x_1,\ldots ,x_m)\Leftrightarrow S\models \Gamma^{\varphi, \mathcal{A}},\]
	where the Boolean assignment $S$ is defined as $S(X_s)=1$ \ifff $s\in X$.

	We now argue that the above can be done using a first-order reduction. First note that clauses of types \ref{forall},\ref{exists},\ref{dep} only depend on the input length $n$. We thus only need to give a closed formula in $n$ (only using $+$ and $\times$) to calculate for every clause the incident literals. 
	
	As an illustration, we give the details for clauses of type \ref{forall}. In total, there are $\sum_{i=0}^{k} n^{m+i}$ clauses of this type, where each summand gives the number of clauses corresponding to one universal quantifier. Consider  a clause $C$ that is counted in the $i$-th summand. $C$ is of the form $X_s \rightarrow X_{s'}$, where $s' \in A^{m+i}$ and $s=s'\upharpoonright \{(\tu{x},y_1,\ldots,y_{i-1})\}$. Let $j$ be the position of $C$ within the $i$-th summand. Then $s'$ is the $j$-th element of $A^{m+i}$ and $s$ is defined accordingly. Clauses arising from existential quantifiers can be handled analogously.

	For clauses corresponding to dependence atoms it is helpful to simplify the construction as follows: 
	Consider the atom $\dep{\tu{u}_t, w_t}$. Instead of adding the clauses from (\ref{dep}) we add a clause for each pair $s,s'\in A^{m+k+l}$ and check whether that pair falsifies the additional condition using an $\FO$-formula. If it does, we construct a fixed tautology instead of the clause $(\neg X_s \vee \neg X_{s'})$.
	
	Clauses of type \ref{top},\ref{bot} depend not only on the input length but on the  actual input. They can be handled simultaneously by adding clauses of the form $X_s \rightarrow X_s$ for type \ref{top} and $X_s\rightarrow \bot \equiv \neg X_s$ for type \ref{bot}. Distinguishing between those two types only depends on the $\FO$-definable property $\calA \models_s \theta$. This finishes the proof.
\end{proof}

\begin{proof}[Proof of Theorem \ref{inc1}]
		To count for a given input structure $\mathcal{A}$ the number of satisfying teams for a formula in $\FO(\subseteq)$, we simply guess a team and verify that it satisfies the formula. The latter step can be done in polynomial time, since model-checking for $\FO(\subseteq)$ is in $\P$ by Theorem~\hyperref[inc=lfp]{\ref*{expr} item \ref*{inc=lfp}}.
\end{proof}

\begin{proof}[Proof of Lemma \ref{lem:inclLogDec}]
	Let $w \in \{0,1\}^*$ and $\calA(w) \defeq \enc_σ^{-1}(w)$. The condition $f_φ(w) > 0$ asks whether there is a non-empty team $X \in \textrm{team}(\calA(w), \tu x)$ such that $\calA(w) \models_X φ(\tu x)$, which is equivalent to asking whether $\calA \models_{\{\emptyset\}} ∃\tu x φ(\tu x)$ holds.
	By Theorem~\hyperref[inc=lfp]{\ref*{expr} item \ref*{inc=lfp}}, over ordered structures, the properties definable by $\FO(\subseteq)$-sentences are exactly the properties in P, hence it follows that $L \in \P$.
\end{proof}

\begin{proof}[Proof of Corollary \ref{inc2}]
	Suppose $\cTeam{\FO(\subseteq)} = \cP$. This means that $\#3\textrm{CNF}$ is contained in $\cTeam{\FO(\subseteq)}$. Then, by Lemma \ref{lem:inclLogDec} the language $\{w \mid \#3\textrm{CNF}(w) > 0\} = 3\textrm{CNF}$ is in \P, which implies $\P = \NP$, contradicting the assumption.
\end{proof}

To represent DualHorn-formulae as first-order structures we use the vocabulary $\tau_{\textnormal{DualHorn}}=(C^1,P^2,N^2)$ with the following interpretations of the predicates:
\begin{itemize}
	\item $C(x)$ \ifff $x$ is a clause in φ
	\item $P(C_i,x)$ \ifff there is a $j$ such that $l_{i,j}=x$
	\item $N(C_i,x)$ \ifff there is a $j$ such that $l_{i,j}=\neg x$
\end{itemize}

\begin{proof}[Proof of Theorem \ref{dhininc}]
	By Theorem \hyperref[myopic]{\ref*{expr} item \ref*{myopic}} it suffices to give a myopic $\tau_{\textnormal{DualHorn}}$-formula $\psi$ such that for all DualHorn-formulae $\varphi$ the number of satisfying assignments is equal to the number of relations $\textbf{R}$ such that $ (\calA, \textbf{R}) \models \psi(R)$. 
	The  formula $\psi$ is defined as
	\begin{align*}
	\psi(R)= \forall x (R(x) \rightarrow (\forall C\ &((\neg \exists z N(C,z))\rightarrow (\exists y P(C,y)\wedge R(y))) \\
	&\wedge (N(C,x)\rightarrow (\exists y P(C,y)\wedge R(y)))))
	\end{align*}
	
	Now suppose $R$ satisfies the formula $\psi$. Let $x \in R$. It follows that all clauses that contain $x$ or contain only positive literals are satisfied by $R$: If $x$ is positively contained in a clause $C$, then it is already satisfied since $x \in R$. If $x$ is negatively contained in $C$, then there must be another variable $y$ that occurs positively in $C$ (since each clause contains at most one negative literal) with $y \in R$. If $C$ only contains positive literals, then there must be one $y \in R.$ This only works if there is at least one variable included in $R$. If $R$ is empty in the first place the premise of the first implication is always false and therefore the conclusion can be anything. It follows that $\psi(\emptyset)$ is always true, which is no surprise since it is a myopic formula. But since we are only looking for non-empty relations (assignments $\beta$ that are not the all-0-assignment respectively) this is not a problem.
	Now for all assignments $\beta\neq \emptyset$ it holds that $\beta \models \varphi \iff \calA_{\varphi}, \beta \models \psi(\beta)$.
\end{proof}

\begin{proof}[Proof of Theorem \ref{sdhhard}]
	The proof is analogous to that of Theorem \ref{dep_completeness} (see also \cite{DBLP:journals/corr/DurandKRV15}). Note that for inclusion logic there is a normal form, introduced in \cite{Yang2019}, very similar to the one used in the proof of Theorem \ref{dep_completeness}. Thus, the only remaining ingredient needed for the proof is the fact that inclusion atoms $\tu{x} \subseteq \tu{y}$ can be expressed by adding the following type of DualHorn clauses:
	\[   X_s\rightarrow  \hspace{20pt} \mathclap{\bigvee_{\substack{s'\in M^{m+k+l} \textrm{ and }\\s(\tu{x})=s'(\tu{y})}}} \hspace{20pt} X_{s'}  . \]
\end{proof}

\begin{proof}[Proof of Theorem \ref{incintotp}]
	In \cite{Gradel2016} Grädel showed that given a formula $\varphi \in \FO(\subseteq)$ and a team $X$, the maximal subteam of $X$ that satisfies $\varphi$ is unambiguous and can be computed in polynomial time. We now use this result to construct a $\totP$ machine for $\#\FO(\subseteq)$.
	
	Let $X_{\text{max}}$ be the maximal team satisfying $\varphi$ and $X_{\text{full}}$ the team containing all assignments. The machine first computes $X_{\text{max}}$ by computing the maximal satisfying subteam of $X_{\text{full}}$. If $X_{\text{max}}=\emptyset$, the machine halts. If $X_{\text{max}}\neq\emptyset$ the machine computes the maximal subteam of $X_{\text{max}}\setminus \{s\}$ for all $s \in X_{\text{max}}$. For every maximal subteam that is not the empty team the machine branches. The computation then continues recursively on the branches of those nonempty subteams treating them as the new maximal subteam.
\end{proof} 

\begin{proof}[Proof of Theorem \ref{dhintotp}]
	Let $\varphi \in \Sigma_1\textnormal{DualHorn}$. The machine that witnesses membership in $\totP$ works as follows:
	Choose a free variable, assign the value $0$ to it and simplify the formula (remove all clauses that are already satisfied and all positive occurrences of the variable). Compute in polynomial time if the resulting formula is satisfiable and do the same for the value $1$ afterwards. Create one branch for each of those two formulae that is satisfying. In each branch do this recursively for the next variable that is not assigned yet. 
\end{proof}

\begin{proof}[Proof of Lemma \ref{lem:ssatComp}]
	Membership of $\#\ssat$ in $\#\cdot \NP$ is due to Aziz et al \cite{DBLP:conf/sat/AzizCMS15}. Since $\#\scnf$ is a restriction of $\#\ssat$, membership for $\#\scnf$ also follows immediately.
	A simple adaptation of Cook's proof of $\NP$-completeness of SAT \cite{DBLP:conf/stoc/Cook71} shows that both problems are hard for $\#\cdot \NP$ \wrt parsimonious reductions.
\end{proof}

\begin{proof}[Proof of Theorem \ref{complete}]
	Membership follows from \ref{lem:ssatComp}, since $\#\scnf^-$ is a special case of $\#\scnf$.
	For the hardness proof, we show a chain of reductions adapted from the one used by Valiant \cite{DBLP:journals/siamcomp/Valiant79} to show the \cP-completeness of $\#\bcnf^+$. 	Recall that the main steps of Valiant's chain of reductions are:
	\begin{align*}
	\#\tcnf \leq \perm &\leq \#\perfmat\\
	&\leq \#\impmat \\
	&\leq \#\bcnf^+.
	\end{align*} 
	Our idea is to add a $\stcnf$-formula to the input of each problem in the above chain of reductions, and to express certain properties of the respective inputs in the added formula. We then count only the solutions to the input that also satisfy the added formula. 

	We first reduce $\#\stcnf$ to $\#(\tcnf, \stcnf^-)$, and then apply the above chain of reductions with the added formulae to reduce $\#(\tcnf, \stcnf^-)$ to $\#(\bcnf^-, \stcnf^-)$, which will be further reduced to $\#\scnf^-$. All of these reductions will be parsimonious, except for the one from perfect matchings to imperfect matchings.

	\underline{$\#\stcnf \leq \#(\tcnf, \stcnf^-)$}:  Let $φ(x_1, \dots, x_k) \in \stcnf$ with $k \in \mathbb{N}$ and
	\[φ(x_1, \dots, x_k) = ∃y_1 \dots ∃y_\ell \bigwedge C_i ∧ \bigwedge D_i ∧ \bigwedge E_i,\]
	where $\textrm{Var}(C_i) \subseteq \{x_1, \dots, x_k\}$, $\textrm{Var}(D_i) \subseteq \{y_1, \dots, y_\ell\}$, $\textrm{Var}(E_i) \cap \{x_1, \dots, x_k\} \!\neq\! \emptyset$ and $\textrm{Var}(E_i) \cap \{y_1, \dots, y_\ell\} \neq \emptyset$.
	We now construct two formulae $φ' \in \tcnf$ and $ψ \in \stcnf^-$ such that $\#\stcnf(φ) = \#(\tcnf, \stcnf^-)(φ', ψ)$.
	Define
	\[φ'(x_1, \dots, x_k, e_1, \dots, e_m) = \bigwedge \left( C_i ∧ \bigwedge (\neg e_i \leftrightarrow \restr{E_i}{\{x_1, \dots, x_k\}})\right) \]
  and
	\[ψ(e_1, \dots,e_m) = ∃y_1 \dots ∃y_\ell \bigwedge D_i ∧ \bigwedge (e_i \rightarrow \restr{E_i}{\{y_1, \dots, y_\ell\}}),\]
	where $m$ is the number of the clauses $E_i$, and $\restr{C}{V}$ denotes the the restriction of the clause $C$ to variables in $V$. More precisely, for a clause $C = \ell_1 \lor \ell_2 \lor \ell_3$ we define
	\[\restr{C}{V} \ \defeq \quad \quad \mathclap{\bigvee_{\substack{i \in \{1,2,3\}\\∃x \in V: \ell_i = x \textrm{ or } \ell_i = \neg x}}} \quad \ell_i.\]

  Note that in these two formulae the new implications and biimplications can be trivially transformed to \tcnf-formulae, and in ψ the free variables only occur negatively.
	Intuitively, using the new variables $e_i$, the formula $\varphi'$ expresses that the assignment to the variables $x_1, \dots, x_k$ does not satisfy any literal in $E_i$, and thus, as expressed in $\psi$, the clause $E_i$ has to be satisfied by an appropriate assignment to the variables $y_1, \dots, y_\ell$. Since the assignments to the new variables $e_i$ are uniquely determined by the assignments to the variables $x_1, \dots, x_k$, the formula $\varphi'\wedge\psi$ has the same satisfying assignments as the original formula $\varphi$.

	\problemdef{$\#(\cycc, \stcnf^-)$}{Directed Graph $G = (V, E)$ with $E = \{e_1, \dots, e_n\}$ and formula $φ(e_1, \dots, e_n) \in \stcnf^-$}{Number of cycle covers $E' \subseteq E$ of $G$ with $c_{E'} \models φ$, where $c_{E'}$ is the characteristic function of $E'$ \wrt $E$}
	
	\underline{$\#(\tcnf, \stcnf^-) \leq \#(\cycc, \stcnf^-)$}: Let $φ(x_1, \dots, x_k) \in \tcnf$, $ψ(x_1, \dots, x_k) \in \stcnf^-$. We map φ to an instance $G$ of $\#\cycc$ by a variant of the reduction from \cite{DBLP:journals/siamcomp/Valiant79}. In Valiant's reduction in \cite{DBLP:journals/siamcomp/Valiant79}, certain pairs of nodes are connected by so-called junctions, which are essentially two edges connecting the nodes in both directions. The goal then is to count only ``good'' cycle covers, namely those cycle covers containing for each junction at most one of these edges. In the original proof this is achieved by replacing junctions by a certain gadget. In our case, we can instead use a formula to express the crucial condition: A junction consisting of two edges $e_1, e_2$ is used appropriately if and only if one of the edges $e_1$ and $e_2$ is not contained in the cycle cover.
	
	In Valiant's construction, each satisfying assignment of φ corresponds to exactly one good cycle cover of $G$, and vice versa. In particular, for each variable $x$ of φ, there is a certain edge $e$ in $G$ such that $e$ is contained in each good cycle cover of $G$ if and only if the variable $x$ is assigned to 1 by the corresponding assignment.

	Now, let $ψ'$ be the formula obtained from ψ by replacing all occurrences of the free variables by the corresponding edges. Let $J$ be the set of junctions in $G$, each of which can be given as the set of its edges. Define
	\[ψ'' = ψ' ∧ \quad \mathclap{\bigwedge_{\{j_1, j_2\} \in J}}\quad (\neg j_1 ∨ \neg j_2)\]
	Note that the free variables in $ψ''$ only occur negatively. Then, we have
	\[\#(\cycc, \stcnf^-)(G,ψ'') = \#(\tcnf, \stcnf^-)(φ, ψ).\]	
	\problemdef{$\#(\perfmat, \stcnf^-)$}{Bipartite Graph $G = (V_1, V_2, E)$ with $E = \{e_1, \dots, e_n\}$ and formula $φ(e_1, \dots, e_n) \in \stcnf^-$}{Number of perfect matchings $E' \subseteq E$ of $G$ with $c_{E'} \models φ$, where $c_{E'}$ is the characteristic function of $E'$ \wrt $E$}
	
	\underline{$\#(\cycc, \stcnf^-) \leq \#(\perfmat, \stcnf^-)$}: Following the 1-to-1 correspondence between cycle covers of directed graphs and perfect matchings of bipartite graphs, the reduction can be given as follows:
	\begin{align*}
	((V, E), φ) &\mapsto
	((V, \{v' \mid v \in V\}, \{\{v_1, v_2'\} \mid (v_1, v_2) \in E\}), φ'),
	\end{align*}
	where $φ'$ is obtained from φ by replacing all occurrences of variables $(v_1, v_2)$ by the  corresponding new edges $\{v_1, v_2'\}$, which are variables in $φ'$.

	\problemdef{$\#(\impmat, \stcnf^-)$}{Bipartite Graph $G = (V_1, V_2, E)$ with $E = \{e_1, \dots, e_n\}$ and formula $φ(e_1, \dots, e_n) \in \stcnf^-$}{Number of matchings $E' \subseteq E$ of $G$ with $c_{E'} \models φ$, where $c_{E'}$ is the characteristic function of $E'$ \wrt $E$}
	
	\underline{$\#(\perfmat, \stcnf^-) \leq \#(\impmat, \stcnf^-)$}: 
	
	\noindent Let $G = (V_1, V_2, E)$ be a bipartite graph with $E = \{e_1, \dots, e_n\}$ and $ψ(e_1, \dots, e_n) \in \stcnf^-$. For the reduction $\perfmat \leq \impmat,$ 
	Valiant constructs bipartite graphs $G_k$ for $1 \leq k \leq |V_1|+1$ from $G$ by adding copies of all nodes in $V_1$ as follows:
	\[G_k = (V_{1,k}, V_2, E_k), \textrm{ where}\]
	\[V_{1, k} = V_1 \cup \{u_{ij} \mid 1 \leq i \leq |V_1|, 1 \leq j \leq k\} \textrm{ and }
	E_k = E \cup \{\{u_{ij}, v_i\} \mid 1 \leq i \leq |V_1|, 1 \leq j \leq k\})\]
	Let $A_r$ be the number of matchings of $G$ of size $|V_1|-r$. Then $G_k$ has exactly $\sum_{r=0}^{|V_1|} A_r \cdot  (k+1)^r$ matchings. Using the number of matchings of all graphs $G_k$ we get a system of linear equations that allows us to compute $A_0$, the number of perfect matchings of $G$. Note that each matching of $G$ corresponds to a number of matchings in each $G_k$ (those consisting only of copies of the edges from the original matching).
	
	To compute the number of perfect matchings $E'$ of $G$ with $c_{E'} \models ψ$, we now associate each graph $G_k$ with a formula $ψ_k$ such that $c_{E''} \models ψ_k$ holds for those matchings $E''$ of $G_k$ corresponding to a matching $E'$ of $G$ with $c_{E'} \models ψ$.
	Let $e_i = \{v_1, v_2\}$ be an edge of $G$. A matching $E''$ of $G_k$ corresponds to a matching $E'$ of $G$ that does not use edge $e_i$ if and only if it does neither use the edge $\{v_1, v_2\}$ nor any of the edges $e_{ij}$, where $e_{ij} = \{u_{1,j}, v_2\}$. Formally this can be written as
	\[c_{E'} \models \neg  e_i ⇔ c_{E''} \models \neg \{v_1, v_2\} ∧ \bigwedge_{1 \leq j \leq k} \neg e_{ij}.\]
	Now in any clause $(\neg  e_i ∨ \ell_1 ∨ \ell_2)$ where $\ell_1$ and $\ell_2$ are literals of bound variables of ψ we can replace $\neg  e_i$ by $\bigwedge_{1 \leq j \leq k} \neg  e_{ij}$. The resulting formula is equivalent to
	\[\bigwedge_{1 \leq j \leq k} (\neg  e_{ij} ∨ \ell_1 ∨ \ell_2),\]
	which is of the desired form. Similarly we can replace any clause of the form $(\neg e_{i_1} ∨ \neg  e_{i_2} ∨ \ell_1)$ by $(\bigwedge_{1 \leq j \leq k} \neg  e_{i_1,j} ∨ \bigwedge_{1 \leq j \leq k} \neg  e_{i_2, j} ∨ \ell_1)$, resulting in the formula 
	\[\bigwedge_{(j_1, j_2) \in \{1, \dots, k\}^2} (\neg  e_{i_1, j_1} ∨ \neg  e_{i_2, j_2} ∨ \ell_1).\]
	Analogously we can also handle clauses of the form $(\neg  e_{i_1} ∨ \neg  e_{i_2} ∨ \neg  e_{i_3})$.
	
	Let $ψ'$ be ψ after applying the above changes. We have that any matching $E''$ of $G_k$ corresponds to a matching of $E'$ of $G$ with $c_{E'} \models ψ$ if and only if $c_{E''} \models ψ'$. Now, we can proceed as in \cite{DBLP:journals/siamcomp/Valiant79}: Let $A_r'$ be the number of matchings $E'$ of $G$ of size $|V_1|-r$ with $c_{E'} \models ψ$. Then $G_k$ has exactly $\sum_{r=0}^{|V_1|} A_r' \cdot  (k+1)^r$ matchings $E''$ with $c_{E''} \models ψ'$. Using the number of such matchings for all graphs $G_k$ we get a system of linear equations allowing us to compute $A_0'$, the number of perfect matchings $E'$ of $G$ with $c_{E'} \models ψ$.
	
	\underline{$\#(\impmat, \stcnf^-) \leq \#(\bcnf^-, \stcnf^-)$}: Let $G = (V_1, V_2, E)$ be a bipartite graph with $E = \{e_1, \dots, e_n\}$ and $ψ(e_1, \dots, e_n) \in \stcnf^-$. The reduction works completely analogously to the proof by Valiant: We define a $\bcnf^-$-formula $φ(e_1, \dots, e_n)$ expressing that each node of the graph is only matched once as:
	\[φ(e_1, \dots, e_n) = \quad \mathclap{\bigwedge_{\substack{(e_1, e_2) \in E \times E\\e_1 \neq e_2 \textrm{ and } e_1 \cap e_2 \neq \emptyset}}} \quad \neg e_1 \lor \neg e_2\]
	Then
	\[\#(\impmat, \stcnf^-)(G, ψ) =	 \#(\bcnf^-, \stcnf^-)(φ,ψ).\]
	
	\underline{$\#(\bcnf^-, \stcnf^-) \leq \#\scnf^-$}: Let $\varphi(x_1, \dots, x_n) \in \bcnf^-$ and $\psi(x_1, \dots, x_n) \in \stcnf^-$. Furthermore, let ψ be of the form $ψ(x_1, \dots, x_n) = ∃y_1 \dots ∃y_k ψ'$. Clearly,
	\[\#(\bcnf^-, \stcnf^-)(\varphi, \psi) = \#\scnf^-(∃y_1 \dots ∃y_k (\varphi \wedge \psi')).\]
\end{proof}

\begin{proof}[Proof of Theorem \ref{thm:scnfstarNumNP}]
	Membership holds because we can simply guess an assignment and check whether it assigns some variables to $1$ and whether it satisfies the input formula.
	
	For hardness we give a reduction from $\#\scnf^-$. Let $\varphi(x_1,\dots,x_n) \in \scnf^-$. Replace every free variable $x_i$ in $\varphi$ by the constant $\bot$, and simplify the formula by removing all literals that are equivalent to $\bot$ and all clauses that are equivalent to $\top$. Denote the resulting formula by $φ'$, and
	\[ψ \defeq φ' ∧ (\neg x_{n+1} ∨ \neg x_{n+2}),\]
	where $x_{n+1}$ and $x_{n+2}$ are the only free variables. Observe that if $\varphi(x_1,\dots,x_n)$ is satisfiable then it is also satisfied by the all-0-assignment. We  use $\#\scnf_*^-$ as an oracle to compute the number of satisfying assignments of $\psi$, not counting the all-0-assignment. The answer can only be either 0 or 2. If the answer is $0$, we conclude that $\varphi(x_1,\dots,x_n)$ is not satisfiable and therefore the number of satisfying assignments of $\varphi$ is $0$. If the answer is $2$, we know that $\varphi(x_1,\dots,x_n)$ is satisfiable. Now we can ask the oracle again for the number $k$ of the satisfying assignments of $\varphi(x_1,\dots,x_n)$, and the actual number of satisfying assignments of $\varphi(x_1,\dots,x_n)$ is then  $k+1$ (as the oracle does not count the all-$0$-assignment).
\end{proof}

\end{document}